\documentclass[11pt]{article}
\usepackage[margin=1.1in]{geometry}
\usepackage{amsmath,amssymb,amsthm}
\usepackage{booktabs}
\usepackage[colorlinks=true,linkcolor=blue,citecolor=blue,urlcolor=blue]{hyperref}
\usepackage{setspace}
\usepackage[round]{natbib}
\usepackage{nameref}
\newtheorem{theorem}{Theorem}

\newtheorem{lemma}{Lemma}
\newtheorem{corollary}{Corollary}
\newtheorem{definition}{Definition}
\newtheorem{assumption}{Assumption}
\theoremstyle{remark}

\newtheorem{example}{Example}
\newcommand{\cR}{\mathcal{R}}
\newcommand{\cL}{\mathcal{L}}
\newcommand{\cD}{\mathcal{D}}

\newcommand{\cG}{\mathcal{G}}

\newcommand{\cE}{\mathcal{E}}
\newcommand{\cT}{\mathcal{T}}
\newcommand{\rs}{\rho^{*}}
\newcommand{\1}{\mathbf{1}}
\newcommand{\cS}{\mathcal{S}}
\title{Random Utility Model with Endogenously Assigned Menus}
\author{Kai Wang\thanks{University of Bath, kw905@bath.ac.uk.}}
\date{8 July 2026}

\begin{document}
\maketitle

\begin{abstract}
There is a largely overlooked assumption underlying stochastic choice theory: menus are assigned exogenously, as if by a hidden randomized controlled trial (RCT). This assumption is not innocuous, because in many real-world settings menus are assigned endogenously according to the decision makers' (DMs) preferences. This paper studies Random Utility Model (RUM) under menu endogeneity and shows that any seemingly anomalous choice behavior can be generated by a population of rational DMs with heterogeneous preferences facing endogenously assigned menus. To address this problem, I propose a new causal estimand: the probability that an alternative would be chosen from a given menu if that menu were presented to the entire population. I then characterize sharp bounds on this estimand using observed menu and choice frequencies. In addition, I show that when choices are observed across multiple markets with different preference distributions and a common menu-assignment rule, the causal estimands from those markets and the menu assignment rule are uniquely identified.

Keywords: random utility model, counterfactual choice, endogenous menu assignment.
\end{abstract}
\section{Introduction}\label{sec:intro}

In many real-world settings menus are assigned endogenously according to the DMs' preferences. In this paper, I study RUM with endogenously assigned menus. I propose the causal estimand that should be used for analysis: the counterfactual choice probabilities if menus were assigned exogenously. I then show how an observer of choice and menu frequencies can identify the sharp bounds on both the causal estimand and the menu assignment rule. I further show that if choices and menu frequencies are observed across several markets with different preference distributions, both objects can be uniquely identified under the assumption that the menu assignment is outcome-based and common cross market.

Stochastic choice theory takes as primitive a choice rule $\rho$, where $\rho(a \mid A)$ denotes the probability that alternative $a$ is chosen from menu $A$. The central exercise of the literature, from Block and Marschak (1960) and Falmagne (1978) through Gul and Pesendorfer (2006) and the modern literature on bounded rationality and limited attention (Manzini and Mariotti, 2014; Cattaneo, Ma, Masatlioglu, and Suleymanov, 2020), is to ask which restrictions on $\rho$ characterize which behavioral models.\footnote{The fundamental axioms, e.g., Regularity, the Block-Marschak Inequalities, Independence of Irrelevant Alternatives (IIA), are all conditions that compare choice frequencies across menus.}

Every such cross-menu comparison, however, carries an implicit assumption: the DMs facing menu $A$ are drawn from the same population, in distribution, as those facing menu $B\neq{A}$. In the language of causal inference, menus must be assigned exogenously. Thus, much of stochastic choice theory relies on a hidden RCT: the analyst compares choices across menus as if menu assignment were exogenously randomized. However, in many modern empirical settings, e.g., scanner panels, insurance-plan choices, online platforms, and retail data, menu assignment is endogenous. For example, a food-delivery platform's algorithm shows users restaurants similar to their past orders; an insurance broker presents plans they expect a household to like; retail assortments respond to local demand. Perhaps surprisingly, this problem has been largely ignored in the economic literature. This paper formalizes this problem and develops identification strategies when menus are endogenously assigned. I now motivate my theory with a simple example.

\begin{example}\label{sec:example}
A delivery platform serves a city with two restaurants: \emph{Aki}, a sushi place ($a$), and \emph{Bella Napoli}, a pizzeria ($b$). A third restaurant, \emph{Chiyo} ($c$), has just joined the platform, and it serves high-end sushi that no one orders online. The population splits equally into two rational types: sushi lovers (type $\alpha$, mass $50\%$, ranking $a \succ b \succ c$) and pizza lovers (type $\beta$, mass $50\%$, ranking $b \succ a \succ c$). 

The platform uses a ``content-based'' recommender system, which solves the cold-start problem by including $c$ into the feeds of the users who are most likely to order it.\footnote{Lops, de Gemmis, and Semeraro (2011) provide an overview of content-based recommendation systems.} Users whose order history says ``sushi'' get the augmented feed $\left\{ a,b,c\right\} $; users whose history says ``pizza'' keep the standard feed. Allowing for the algorithm's occasional misclassification and exploration, type $\alpha$ sees the augmented feed $\{a,b,c\}$ with probability $0.9$ and the standard feed $\{a,b\}$ with probability $0.1$; type $\beta$ sees the augmented feed $\{a,b,c\}$ with probability $0.1$ and the standard feed $\{a,b\}$ with probability $0.9$.

An analyst observes the standard choice data: feed shown (menu), and restaurant ordered (choice). By Bayes' rule,
\[
\rho(a \mid \{a,b\}) = 0.1, \qquad \rho(a \mid \{a,b,c\}) = 0.9.
\]
\end{example}

The pattern is a textbook asymmetric-dominance finding. Alternative $c$ resembles a dominated decoy for $a$: $c$ itself is never chosen, and $a$'s share jumps ninefold when $c$ is present. The analyst may conclude that placing an overpriced sushi option $c$ next to $a$ boosts the choice probability of $a$. The truth, however, is that $c$'s presence in a menu does not \emph{attract} users to $a$, as the agents in this example are rational DMs. The decoy-like choice probabilities happen because menus are assigned endogenously. I show that this example generalizes: any choice data can be generated by a population of perfectly rational DMs with heterogeneous preferences facing endogenously assigned menus. Thus, even choice patterns that appear highly anomalous may be artifacts of selection rather than evidence against rationality.

This result shows that any existing method that studies $\rho$ without accounting for menu endogeneity may be misleading, as studying $\rho$ confounds how preferences rank alternatives with which preference types end up facing which menus. A natural question is: what estimand should we study when the menus are endogenously assigned? In this paper, I propose a novel estimand that restores a causal interpretation. This causal estimand recovers the choice probability that would obtain under exogenous menu assignment. Let $\succ$ be any strict linear order on $X$; let $\cL$ denote the set of all strict linear orders (preferences), and let $\nu \in \Delta(\cL)$ be a distribution over preferences. I claim that the causal estimands are the \emph{counterfactual choice probabilities}:
\begin{equation}\label{eq:rhostar}
\rs(a \mid A) = \nu\big(\{\succ \in \cL : a \succ b \;\; \forall b \in A \setminus \{a\}\}\big),
\end{equation}
that is, the share that would choose $a$ if menu $A$ were assigned to the whole population.

Throughout the paper, the analyst needs to observe the conditional choice probability $\rho$ as well as the distribution of menus $\mu$. The literature on stochastic choice theory studies $\rho$ alone, treating $\mu$ as irrelevant exactly because of the hidden assumption that the menus are assigned exogenously. If menus are assigned independently of preference types, then the population facing any menu is representative of the overall population, and $\mu$ conveys no information regarding the menu assignment. For my purposes, $\mu$ is indispensable. The intuition is that $\mu(A)$ is the \emph{weight of evidence} that menu $A$ contributes about population preferences. For example, observing that $90\%$ of respondents choose $a$ over $b$ when shown $\{a,b,c\}$ is evidence that at least $90\%$ of the population prefers $a$ to $b$, but only if those respondents are representative of the population. If menu $\{a,b,c\}$
were shown to only a small fraction of the population, the high choice rate of $a$ could reflect extreme selection into that menu rather than population-wide preferences.

What is more, observing $\mu$ imposes almost no data-collection burden: it is recorded alongside $\rho$ in almost every applied choice dataset, but discarded at the
analysis stage when researchers condition on the menu rather than modelling it
jointly. For example, many scanner datasets (e.g., IRI, Nielsen) log both the product assortment on the shelf and the item purchased.\footnote{Conlon and Mortimer (2013) use vending-machine stock-outs where the menu is directly observed.} In other words, this requirement is already met in standard datasets that log both the offered alternatives (menu) and the chosen one (choice).

My first main contribution is to propose a method to identify the sharp bounds on $\rs$ and the menu assignment rule from $\rho$ and $\mu$ only. Given the observed menu frequencies $\mu$ and conditional choice probabilities $\rho$, I characterize the sharp identified set for $\rs$: the smallest set of counterfactual choice probabilities consistent with the data and rationality.
To establish the identification results, I translate the problem into a
transportation problem: the masses of the preference types are supplies,
the observed menu-choice joint frequencies are demands, and mass may be
transported from a type to an observation only if the type would make that
choice from that menu. The key step is employing the Feasibility Theorem
of Gale (1957) from transportation theory, which provides a necessary and
sufficient condition for the existence of a transport plan between given
supplies and demands. This tool delivers the sharp identified set for the
preference distribution, and with it sharp bounds on the counterfactual
choice probabilities $\rs$ and on the menu-assignment probabilities.

Notably, the bounds are well defined even for menus that \emph{never} appear in
the data and for alternatives that are never chosen: types that would
select an unobserved alternative are partially revealed by their choices
elsewhere, so the data remain informative about counterfactuals the
analyst never directly observes. In Appendix~\ref{example:sharp}, I illustrate how to derive the sharp bounds for $\rs$ and the menu assignment rule in the motivating example.\footnote{In this motivating example, these bounds are remarkably narrow. The counterfactual choice probability of $a$ from $\left\{ a,b\right\} $ (i.e., $\rho^{*}\left(a\mid\left\{ a,b\right\} \right)$) is point identified
at $0.5$, and from $\left\{ a,b,c\right\}$ (i.e., $\rho^{*}\left(a\mid\left\{ a,b,c\right\} \right)$) is identified up to the
interval $\left[0.45,\,0.5\right]$, despite conditional choice probabilities $\rho$
of $0.1$ and $0.9$. The apparent decoy effect is thus bounded away from a
single market: adding $c$ to the feed cannot raise the counterfactual
share of $a$ at all.}

I also develop a closed-form expression of the bounds using a classical theorem of Shapley (1971) on supermodular
set functions: both endpoints reduce to finite sums of observed choice frequencies, so the sharp bounds
can be computed from any dataset by summation alone. In Appendix~\ref{example:closed}, I show how one can identify the closed-form sharp bounds in the motivating example as well.

My second main contribution is to show that unique identification of the causal estimand $\rs$ and the menu assignment rule is obtained under an intuitive restriction on the menu-assignment process. To do that, the analyst needs to observe choice data from different markets, each of which contains a different distribution of preference types. This type of data set is common in practice: an insurance company may operate across regions with different customer populations, a chain supermarket may face different local demand across stores, and a food-delivery platform may serve neighborhoods or cities with different tastes. 

I also assume that the probability of assigning a menu conditional on the preference type is determined by the alternative that the preference type is predicted to choose from that menu, and that this menu assignment rule is common across markets. This assumption is plausible in many applications. Platforms, brokers, and retailers often target menus using predictions of which alternative a consumer is choosing. Indeed, a menu designer who seeks to induce, facilitate, or monetize particular choices has a natural reason to condition assignment on the predicted chosen alternative rather than on the consumer's entire latent preference ranking.\footnote{See, e.g., Covington, Adams, and Sargin (2016), Knott, Hayes, and Neslin (2002).} The common-assignment component is also natural when the same designer operates across markets. If a menu-assignment rule is operationally successful in one market, the designer has incentives to deploy the same rule elsewhere.\footnote{The assumption would fail if the designer deliberately used different assignment rules across markets, for example as part of an A/B test or a market-specific experimentation policy.} When the same assignment rule operates across several markets with different preference distributions, I show that the preference variation across markets is powerful for identification: under a simple rank condition, I show that the observed menu and choice frequencies across markets uniquely identify both the counterfactual choice probabilities $\rho^*$ and the menu assignment rule. In Appendix \ref{example:unique}, I also how one can use this result to identify uniquely $\rs$ and the menu assignment rule in the motivating example.

\section{Literature Review}

The causal estimand I study is related to Manski (2007), which also considers partial identification of counterfactual choice probabilities. In his framework, observed choice probabilities partially identify the distribution of preference types in the population, and this partial knowledge in turn yields bounds on choice probabilities in unobserved menus. The key difference lies in where the identification problem originates. In Manski (2007), the DMs face exogenously assigned menus, so the observed choice probabilities are directly informative about population preferences, and the difficulty is extrapolation to menus never observed in the data. In my setting, by contrast, even the observed conditional choice probabilities are contaminated by selection into menus, and the counterfactual of interest is the assignment of an \emph{observed} menu to the entire population. This shift also changes the data requirement: identification here rests on using the menu distribution $\mu$ with the conditional choice rule $\rho$, whereas $\mu$ plays no role in Manski's (2007) analysis.

A closely related literature studies choice data in which part
of the environment is unobserved. Aguiar and Kashaev (2025) recover the
joint distribution of choice sets and preferences from a cross-section of repeated choices. Agarwal and Somaini (2025) identify demand under
latent choice constraints using two sets of instruments, shifters of
preferences and shifters of menus. Liu and Luo (2025) estimate
demand when a monopolist facing price rigidity adjusts unobserved
assortments. Closest in technique, Kono, Saito, and Sandroni (2025)
characterize the testable content of the RUM when the choice frequencies of
some alternatives are unobservable, deriving---as I do---a system of linear
inequalities from a flow-feasibility theorem, building on the network-flow approach of Fiorini (2004). What separates my exercise
from all of the work above is that nothing in my data is missing: menus and
choice frequencies are fully observed, and the difficulty is instead that
endogenous assignment contaminates the interpretation of the conditional
frequencies. The contrast is sharpest with Kono, Saito, and Sandroni
(2025): under their relaxation of the classical data assumptions,
rationality retains testable content, which they characterize exactly;
under mine it retains none (Theorem~\ref{thm:anything}), and the analysis
turns from testing the model to bounding counterfactuals and the assignment
rule.

The recent study of Tomlinson, Ugander, and Benson (2021) in computer science works on this endogeneity issue. They work with parametric discrete-choice models and use observed covariates of the DM to analyze choices when menus are endogenous. They also exploit structured menu-assignment mechanisms to improve out-of-sample prediction. My paper addresses this question in a very different way: my analysis is nonparametric, and my identification does not rely on observing covariates of the DM; the analyst needs only the choice probabilities and menu frequencies. The identification of the sharp bounds for the estimand imposes no structure whatsoever on the assignment rule, and where I do impose structure to obtain point identification, the restriction is placed directly on the assignment rule itself (that steering may depend on the DM's preference only through the predicted choice from the menu shown), which is a statement about how the rule uses the latent preference, not about observables.

A complementary literature on limited attention endogenizes not the menu but the \emph{consideration set}, the subset of the menu that the DM actually evaluates. Masatlioglu, Nakajima, and Ozbay (2012), Manzini and Mariotti (2014), Caplin and Dean (2015), Cattaneo, Ma, Masatlioglu, and Suleymanov (2020), Gibbard (2021), and Aguiar, Boccardi, Kashaev, and Kim (2023) characterize such models axiomatically and show how the choice probability $\rho$ can be used to identify preference and consideration set. Closest in spirit within this literature is Dardanoni, Manzini, Mariotti, and Tyson (2020), whose focus, like mine, is identification: they show that aggregate choice shares, observed jointly across three ``occasions,'' uniquely pin down the population distribution of cognitive characteristics. Dardanoni, Manzini, Mariotti, Petri, and Tyson (2023) extend this program to mixture choice data, the joint distribution of choices of the same agents across a collection of menus, and show that such data simultaneously identify the population's preferences and cognition. My unique identification result requires no such joint choice share data: choice frequencies together with menu frequencies from different markets suffice for unique identification. On the econometric side, Abaluck and Adams-Prassl (2021) develop a general framework in which product characteristics are observable, unlike in my setting, and exploit asymmetries in cross-characteristic choice probability responses to identify consideration sets. For choice under risk, Barseghyan, Coughlin, Molinari, and Teitelbaum (2021) study preferences and attention in a general model with minimal assumptions on the consideration-set formation process, targeting partial identification of its components. Crawford, Griffith, and Iaria (2021) propose an identification strategy based on reducing the menu to a ``sufficient set'' of alternatives that are certain to be considered. Gaynor, Propper, and Seiler (2016) exploit institutional changes to identify consideration sets in hospital choice, while Honka, Horta\c{c}su, and Vitorino (2017) treat consideration sets as the outcome of a costly search process. All of these studies, however, assume implicitly that the menu itself is exogenously assigned, and endogeneity enters only through the mapping from the menu to the consideration set. This paper studies a completely different problem, in which the DMs evaluate every alternative from the menu, but their menus are assigned endogenously.

Another related literature studies the empirical content and identification of preferences within RUM. The classical work of Block and Marschak (1960), Falmagne (1978), and McFadden and Richter (1991) characterizes when a stochastic choice rule admits a random utility representation; related work, including Barberá and Pattanaik (1986) and Kitamura and Stoye (2018), studies stochastic rationalizability and its testable implications. A more recent strand asks when the underlying distribution over preferences is uniquely determined by the stochastic choice rule. Turansick (2022) characterizes when a random utility representation is unique; Apesteguia, Ballester, and Lu (2017) obtain uniqueness in single crossing random utility models; Apesteguia and Ballester (2023) study ordered type restrictions on restricted menu domains; and Chambers and Turansick (2025) and Caradonna and Turansick (2026) study the limits of identification and the restrictions that restore it. This paper abstracts away from this preference identification problem. Its identification issue arises at a prior stage: when menus are endogenously assigned, the analyst does not observe the counterfactual stochastic choice rule $\rho^*$ itself, but only the selected rule $\rho$. The identification of the distribution over preferences is then left to the existing literature above. My contribution here is to show what the data reveal about $\rho^*$ and the menu assignment rule before that preference identification question can even be posed.

\section{Framework}\label{sec:frame}
Let $X$ be a finite set of alternatives with $|X| = n \ge 2$, and let $\cD \subseteq 2^X \setminus \{\emptyset\}$ denote the collection of menus observed in the data. Let $\cL$ be the set of strict linear orders on $X$; a generic order is written $\succ$, and $\max(A, \succ)$ denotes the $\succ$-best element of a menu $A$. For a finite set $S$, $\Delta(S)$ denotes the set of probability distributions on $S$.

Behavior follows RUM: each DM is characterized by a preference type $\succ \in \cL$ and, when facing menu $A$, chooses $\max(A, \succ)$. For $A \subseteq X$ and $a \in A$, define the \emph{top cone}
\[
\cT(a, A)
=
\{\succ \in \cL : a \succ b \text{ for all } b \in A \setminus \{a\}\},
\]
the set of preference orders under which $a$ is chosen from $A$; equivalently, $\cT(a, A) = \{\succ \in \cL : \max(A, \succ) = a\}$.

The key primitive of my framework is the \textit{menu-assignment process}, which specifies who (preference type) faces what (menu): a joint distribution $\pi \in \Delta(\cL \times \cD)$ over (type, menu) pairs. Its marginals are the population preference distribution
\[
\nu(\succ)
=
\sum_{A \in \cD} \pi(\succ, A),
\qquad \succ \in \cL,
\]
and the menu frequency
\[
\mu(A)
=
\sum_{\succ \in \cL} \pi(\succ, A),
\qquad A \in \cD.
\]
Since all the menus in $\cD$ are observed, we have $\mu(A) > 0$ for all $A \in \cD$, so that the conditional type distribution $\pi(\succ \mid A) = \pi(\succ, A) / \mu(A)$ is well defined.

The analyst observes the pair $(\mu, \rho)$, where
\begin{equation}\label{eq:data}
\rho(a \mid A)
=
\pi\big(\cT(a, A) \mid A\big)
=
\sum_{\succ \in \cT(a, A)} \pi(\succ \mid A)
\end{equation}
is the conditional choice probability. Equivalently, the data reveal the joint distribution of menus and choices,
\[
d(A, a)
=
\mu(A)\, \rho(a \mid A)
=
\sum_{\succ \in \cT(a, A)} \pi(\succ, A).
\]

As argued in Section~\ref{sec:intro}, when menus are endogenously assigned, the observed conditional choice probability $\rho(a \mid A)$ is not the causal object of interest. The causal estimand is instead the \emph{counterfactual choice probability}
\[
\rs(a \mid A)
=
\nu\big(\cT(a, A)\big),
\]
the share of the population that would choose $a$ from menu $A$ if $A$ were assigned independently of preferences.\footnote{The discrepancy $\rho(a \mid A) - \rs(a \mid A)$ is precisely the selection bias induced by endogenous menu assignment.} Note that $\rs(\cdot \mid A)$ is well defined for every nonempty $A \subseteq X$, even for menus outside of $\cD$.

Under this framework, the implicit assumption in the stochastic choice literature corresponds to the special case in which menu assignment is independent of preferences:
\begin{equation}\label{eq:indep}
\pi = \nu \otimes \mu,
\qquad \text{that is,} \qquad
\pi(\succ, A) = \nu(\succ)\, \mu(A)
\quad \text{for every } \succ \in \cL,\ A \in \cD.
\end{equation}
Under \eqref{eq:indep}, the conditional type distribution facing any menu coincides with the population distribution, $\pi(\cdot \mid A) = \nu$, and hence
\[
\rho(a \mid A)
=
\pi\big(\cT(a, A) \mid A\big)
=
\frac{\nu\big(\cT(a, A)\big)\, \mu(A)}{\mu(A)}
=
\nu\big(\cT(a, A)\big)
=
\rs(a \mid A).
\]
Thus the standard interpretation of stochastic choice data implicitly treats menu assignment as a randomized design: selection bias vanishes, and the observed choice rule $\rho$ itself identifies the counterfactual choice rule $\rs$.

\section{Identification}\label{sec:id}

Having described the data-generating process, I now turn to the inverse problem: what can the analyst learn about the population preference distribution $\nu$, and hence about the counterfactual choice probabilities $\rs$, from the observables $(\mu, \rho)$?

Two notational conventions are used throughout. For any set of orders
$\cR\subseteq\cL$, write $\nu\left(\cR\right)=\sum_{\succ\in\cR}\nu\left(\succ\right)$,
and similarly write $\pi\left(\succ\mid A\right)$ for
$\pi\left(\left\{\succ\right\}\mid A\right)$. Because maximization is
maintained throughout, I refer to $\nu$ itself as the RUM when no
confusion arises.

\subsection{Any Data Can Be Rationalized}
Equation \eqref{eq:data} describes the data-generating process: a joint distribution $\pi$ produces the observables $(\mu, \rho)$. The analyst, however, observes $(\mu, \rho)$ only, not $\nu$, which is used for constructing $\rs$. The inverse problem is to determine which structural joint distributions could have produced the observed data. Following the language of revealed preference analysis, I call this inverse relation \emph{rationalization}. 

\begin{definition}[Rationalization and consistent RUMs]\label{def:rat}
A joint distribution $\pi \in \Delta(\cL \times \cD)$ \emph{rationalizes} $(\mu, \rho)$ if
\[
\sum_{\succ \in \cL} \pi(\succ, A) = \mu(A)
\qquad \text{and} \qquad
\pi\big(\cT(a, A) \mid A\big) = \rho(a \mid A)
\]
for every $A \in \cD$ and every $a \in A$. For a rationalizing $\pi$, let $\nu_\pi$ denote its preference marginal, $\nu_\pi(\succ) = \sum_{A \in \cD} \pi(\succ, A)$. The \emph{set of consistent RUMs} is
\[
\Theta(\mu, \rho)
=
\big\{\nu_\pi \in \Delta(\cL) : \pi \text{ rationalizes } (\mu, \rho)\big\}.
\]
\end{definition}

The first task is to establish the severity of the identification problem when menus are endogenously assigned. The classical stochastic choice literature treats axioms such as Regularity and the Block--Marschak inequalities as testable restrictions for RUM on the data. Those axioms, however, are derived under the implicit assumption that menus are exogenously assigned. Theorem~\ref{thm:anything} asks what empirical content the axioms retain when that assumption is dropped. The answer is: none. It is not merely that some RUM-inconsistent data patterns can be rationalized, but that \emph{every} data pattern can be rationalized by a population of rational DMs, no matter how anomalous it appears.

\begin{theorem}[Universal rationalizability]\label{thm:anything}
For any dataset $(\mu, \rho)$ with $\mu \in \Delta(\cD)$ and $\rho(\cdot \mid A) \in \Delta(A)$ for each $A \in \cD$, there exists a rationalizing $\pi$. That is, $\Theta(\mu, \rho) \neq \emptyset$.
\end{theorem}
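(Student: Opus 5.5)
The proof is a direct construction of a rationalizing joint distribution. For every menu $A \in \cD$ and every alternative $a \in A$, I would fix one order $\succ_{a,A} \in \cT(a, A)$. The top cone is nonempty because any linear order that ranks $a$ first overall lies in it; for definiteness, take $a$ first and the remaining alternatives in some fixed order. Then I define
\[
\pi(\succ, A) = \sum_{a \in A \,:\, \succ_{a,A} = \succ} \mu(A)\,\rho(a \mid A),
\]
so that for each menu $A$ the mass $d(A,a) = \mu(A)\rho(a \mid A)$ is placed on the single order $\succ_{a,A}$.

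The key steps, in order, are as follows.

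First, $\pi$ is nonnegative, and its total mass is $\sum_A \mu(A)\sum_{a\in A}\rho(a\mid A) = \sum_A \mu(A) = 1$. Hence $\pi \in \Delta(\cL\times\cD)$.

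Second, the menu marginal is $\sum_\succ \pi(\succ,A) = \mu(A)\sum_{a}\rho(a\mid A) = \mu(A)$.

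Third, the conditional choice probabilities are reproduced. For $a \in A$, the orders $\succ_{b,A}$ with $b \neq a$ satisfy $\max(A,\succ_{b,A}) = b \neq a$, so they lie outside $\cT(a,A)$. Only $\succ_{a,A}$ contributes to $\pi(\cT(a,A)\mid A)$. Even if $\succ_{a,A} = \succ_{b,A}$ for some $b \neq a$, that order would have to pick both $a$ and $b$ from $A$, which is impossible; so the orders $\succ_{a,A}$, $a \in A$, are distinct and the formula for $\pi$ involves no collisions within a menu. Therefore $\pi(\cT(a,A)\mid A) = \mu(A)\rho(a\mid A)/\mu(A) = \rho(a\mid A)$.

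This shows $\pi$ rationalizes $(\mu,\rho)$, so $\nu_\pi \in \Theta(\mu,\rho)$.

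The only point needing care is that the same order may be reused across different menus, for example an order that ranks $a$ first serving both $\{a,b\}$ and $\{a,c\}$. This is harmless: $\pi$ is defined on (type, menu) pairs, and each menu's conditional distribution is checked separately. The construction imposes no cross-menu restriction, and that absence is exactly why the result holds. No earlier result is needed, and I expect no real obstacle beyond this bookkeeping. A remark could note that the example in Section~\ref{sec:intro} is a special case with genuinely heterogeneous types.
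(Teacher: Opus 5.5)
Your proposal is correct and is essentially the paper's proof: both fix, for each cell $(A,a)$, one order ranking $a$ first, place the mass $d(A,a)=\mu(A)\rho(a\mid A)$ on that order paired with menu $A$, and recover $\rho$ because the top cones $\{\cT(b,A): b\in A\}$ partition $\cL$. Your explicit check that the chosen orders are distinct within each menu makes visible a point the paper's partition argument already covers.
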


\begin{proof}
For each pair $(A, a)$ with $A \in \cD$ and $a \in A$, fix an order $\succ^{(A,a)}$ that ranks $a$ above every other alternative in $X$, so that $\succ^{(A,a)} \in \cT(a, A)$. Define $\pi$ by placing mass $d(A, a) = \mu(A)\, \rho(a \mid A)$ on the pair $\big(\succ^{(A,a)}, A\big)$ for every such cell $(A, a)$. Since $\sum_{A \in \cD} \sum_{a \in A} d(A, a) = \sum_{A \in \cD} \mu(A) = 1$, the measure $\pi$ is a well-defined element of $\Delta(\cL \times \cD)$.

First, for every $A \in \cD$,
\[
\sum_{\succ \in \cL} \pi(\succ, A)
=
\sum_{a \in A} \mu(A)\, \rho(a \mid A)
=
\mu(A).
\]
Second, fix $A \in \cD$ and $a \in A$. Because every order has a unique $\succ$-best element in $A$, the top cones $\{\cT(b, A) : b \in A\}$ partition $\cL$; in particular, $\succ^{(A,b)} \in \cT(b, A)$ lies outside $\cT(a, A)$ for every $b \neq a$. Hence the only mass assigned to menu $A$ inside $\cT(a, A)$ is $d(A, a)$, and
\[
\pi\big(\cT(a, A) \mid A\big)
=
\frac{d(A, a)}{\mu(A)}
=
\rho(a \mid A).
\]
Thus $\pi$ rationalizes $(\mu, \rho)$.
\end{proof}

Theorem~\ref{thm:anything} is an ``anything goes'' result, which indicates that rationality is never falsified: since $\Theta(\mu, \rho)$ is nonempty for every dataset, the axioms of stochastic choice lose all testable content once the exogeneity of menu assignment is dropped. A violation of any such axiom may reflect nothing more than selection into menus, exactly as in the motivating Example~\ref{sec:example}. Perhaps more constructively, however, the theorem does not say the data are uninformative at all. Although rationality cannot be tested, the observables $(\mu, \rho)$ still restrict the set of consistent RUMs $\Theta(\mu, \rho)$, and through it the counterfactual choice probabilities $\rs$.

\subsection{Sharp Bounds on the Estimand and Menu Assignment Rule}

Having established what cannot be learned from observational data, I now turn to what can. The observed choice frequencies on each menu constrain which preference types could have been present there, and these constraints carve out a subset of $\Delta(\cL)$: the identified set $\Theta(\mu, \rho)$. Formally, let
\[
\cG = \{(A, a) : A \in \cD,\ a \in A\}
\]
denote the set of observed menu--choice \emph{cells}, and recall that cell $(A, a)$ carries probability mass $d(A, a) = \mu(A)\, \rho(a \mid A)$. Under maximization, the mass in cell $(A, a)$ can be supplied only by types in the top cone $\cT(a, A)$. A candidate preference distribution $\nu$ is therefore consistent with the data if and only if the masses $\nu(\succ)$, $\succ \in \cL$, can be transported to the cell demands $d(A, a)$, $(A, a) \in \cG$, with the mass of type $\succ$ permitted to flow into cell $(A, a)$ only if $\succ \in \cT(a, A)$. This is a transportation problem, and its feasibility conditions yield the identified set.

\begin{theorem}[Sharp identified set]\label{thm:sharp}
Let $\nu \in \Delta(\cL)$. The following are equivalent:
\begin{enumerate}
\item[(i)] $\nu \in \Theta(\mu, \rho)$;
\item[(ii)] there exists a map $f : \cL \times \cG \to \mathbb{R}_{+}$ with $f(\succ, (A, a)) > 0$ only if $\succ \in \cT(a, A)$, such that
\begin{align*}
\sum_{(A, a) \in \cG} f\big(\succ, (A, a)\big)
&= \nu(\succ)
\text{ for all } \succ \in \cL,\\
\sum_{\succ \in \cL} f\big(\succ, (A, a)\big)
&= d(A, a)
\text{ for all } (A, a) \in \cG.
\end{align*}

\item[(iii)] for every $\cE \subseteq \cG$,
\begin{equation}\label{eq:hall}
\sum_{(A, a) \in \cE} \mu(A)\, \rho(a \mid A)
\;\le\;
\nu\Big( \bigcup_{(A, a) \in \cE} \cT(a, A) \Big).
\end{equation}
\end{enumerate}
\end{theorem}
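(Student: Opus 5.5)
I will prove the cycle (i)$\Leftrightarrow$(ii)$\Leftrightarrow$(iii). The step (i)$\Leftrightarrow$(ii) is a re-encoding of a rationalizing joint distribution as a transport plan. The step (ii)$\Leftrightarrow$(iii) is an application of Gale's (1957) Feasibility Theorem, which in the bipartite supply--demand case reduces to the Hall-type supply--demand theorem.

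For (i)$\Rightarrow$(ii), take a rationalizing $\pi$ with $\nu_\pi=\nu$ and define $f(\succ,(A,a))=\pi(\succ,A)\,\1\{\succ\in\cT(a,A)\}$. Since the top cones $\{\cT(b,A):b\in A\}$ partition $\cL$ (as used in the proof of Theorem~\ref{thm:anything}), two facts follow.
\begin{itemize}
\item Each type's row sums to $\sum_{A}\pi(\succ,A)=\nu(\succ)$, because for each $A$ exactly one cell $(A,a)$ receives the mass $\pi(\succ,A)$.
\item Each cell's column sums to $\pi(\cT(a,A),A)=\mu(A)\rho(a\mid A)=d(A,a)$ by the rationalization conditions.
\end{itemize}
For (ii)$\Rightarrow$(i), given $f$, set $\pi(\succ,A)=\sum_{a\in A}f(\succ,(A,a))$. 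The support restriction forces $f(\succ,(A,a))=0$ unless $a=\max(A,\succ)$, so $\pi(\succ,A)=f(\succ,(A,\max(A,\succ)))$. From this:
\begin{itemize}
\item the menu marginal is $\sum_{a}d(A,a)=\mu(A)$;
\item $\pi(\cT(a,A),A)=\sum_{\succ}f(\succ,(A,a))=d(A,a)$, so the conditional is $\rho(a\mid A)$;
\item the type marginal is $\nu$ by the row constraint.
\end{itemize}
Hence $\pi$ rationalizes $(\mu,\rho)$ and $\nu_\pi=\nu$.

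For (ii)$\Rightarrow$(iii), fix $\cE\subseteq\cG$ and write $U=\bigcup_{(A,a)\in\cE}\cT(a,A)$. Because mass into cells of $\cE$ can come only from $U$,
\[
\sum_{\cE}d(A,a)=\sum_{\cE}\sum_{\succ\in U}f(\succ,(A,a))\le\sum_{\succ\in U}\nu(\succ)=\nu(U).
\]

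The converse (iii)$\Rightarrow$(ii) is the substantive step and the main obstacle. The plan is to invoke Gale's Feasibility Theorem for the bipartite network with source arcs of capacity $\nu(\succ)$, uncapacitated arcs $\succ\to(A,a)$ whenever $\succ\in\cT(a,A)$, and sink arcs with demand $d(A,a)$. Gale's theorem says a flow saturating all demands exists iff every set of demand nodes $\cE$ satisfies $d(\cE)\le$ the total supply of nodes adjacent to $\cE$, which is exactly \eqref{eq:hall}.

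If I argue directly instead, I would use max-flow/min-cut. Any finite cut must contain, for each demand node $(A,a)$ on the source side, all of its neighbours $\cT(a,A)$ on the source side too, since otherwise an infinite-capacity arc crosses the cut. The cut capacity is therefore at least $\nu(U)+d(\cG\setminus\cE)$, which is $\ge d(\cG)$ by \eqref{eq:hall}. So the max flow saturates demand.

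It remains to upgrade demand-saturating flow to the equalities in (ii). Total supply is $\nu(\cL)=1=\sum_{\cG}d$, so a flow delivering all of the demand must also exhaust every supply. This gives both equalities and completes the proof.
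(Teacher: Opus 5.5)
Your argument is essentially the paper's: (i)$\Leftrightarrow$(ii) uses the same re-encoding $f(\succ,(A,a))=\pi(\succ,A)\1\{\succ\in\cT(a,A)\}$ and its inverse $\pi(\succ,A)=\sum_{a\in A}f(\succ,(A,a))$. (ii)$\Leftrightarrow$(iii) uses Gale's Feasibility Theorem; you also check the easy direction and the balance $\nu(\cL)=1=\sum_{\cG}d$ by hand. Resting (iii)$\Rightarrow$(ii) on Gale, as the paper does, makes the proof complete.

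Your optional self-contained min-cut sketch, however, is wrong as written, because the two sides of the cut are reversed. Arcs point $s\to\succ\to(A,a)\to t$. A cut $(S,T)$ therefore pays $\nu(\succ)$ for each type in $T$ and $d(A,a)$ for each cell in $S$. It is infinite only when some type in $S$ is adjacent to a cell in $T$. An arc from a sink-side type into a source-side cell points backwards and costs nothing.

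Hence nothing forces the neighbours of source-side cells onto the source side, and your bound $\nu(U)+d(\cG\setminus\cE)$, with $\cE$ the source-side cells, is false. For example, the finite cut $S=\{s,(A,a)\}$ has capacity $1+d(A,a)$. Your bound for it is $\nu(\cT(a,A))+1-d(A,a)$, which exceeds the capacity whenever $\nu(\cT(a,A))>2d(A,a)$.

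The repair is to let $\cE$ be the cells on the sink side. Finiteness then forces $\bigcup_{\cE}\cT(a,A)\subseteq T$, so the capacity is at least $\nu\bigl(\bigcup_{\cE}\cT(a,A)\bigr)+d(\cG\setminus\cE)\ge d(\cE)+d(\cG\setminus\cE)=d(\cG)$ by (iii). Your final step, that a flow of value $1$ must saturate every source arc, then delivers (ii).
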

\vspace{0.5cm}
The proof of Theorem~\ref{thm:sharp} uses a classical feasibility result for transportation problems, the Feasibility Theorem of Gale (1957), which I state directly in the present setting: the supplies are the type masses $\nu\left(\succ\right)$, the demands are the cell masses $d\left(A,a\right)$, and mass may flow from a type to a cell only if the type belongs to the cell's top cone.\footnote{Gale's (1957) theorem: let $S$ and $T$ be finite sets, let $s:S\to\mathbb{R}_{+}$ and $t:T\to\mathbb{R}_{+}$ satisfy $\sum_{\sigma\in S}s\left(\sigma\right)=\sum_{\tau\in T}t\left(\tau\right)$, and let $E\subseteq S\times T$ be a set of admissible pairs. Then there exists $f:S\times T\to\mathbb{R}_{+}$, positive only on $E$, with row sums $s$ and column sums $t$, if and only if $\sum_{\tau\in U}t\left(\tau\right)\le\sum_{\sigma\in N\left(U\right)}s\left(\sigma\right)$ for every $U\subseteq T$, where $N\left(U\right)=\left\{\sigma\in S:\left(\sigma,\tau\right)\in E\text{ for some }\tau\in U\right\}$. Lemma~\ref{lem:gale} is the case $S=\cL$, $T=\cG$, $s=\nu$, $t=d$, and $E=\left\{\left(\succ,\left(A,a\right)\right):\succ\in\cT\left(a,A\right)\right\}$: the balance condition holds because $\sum_{\succ\in\cL}\nu\left(\succ\right)=1=\sum_{A\in\cD}\mu\left(A\right)\sum_{a\in A}\rho\left(a\mid A\right)=\sum_{\left(A,a\right)\in\cG}d\left(A,a\right)$, and the neighborhood of a collection $\cE\subseteq\cG$ is $N\left(\cE\right)=\bigcup_{\left(A,a\right)\in\cE}\cT\left(a,A\right)$.}

\begin{lemma}[Gale, 1957]\label{lem:gale}
Let $\nu\in\Delta\left(\cL\right)$. There exists $f:\cL\times\cG\to\mathbb{R}_{+}$ with $f\left(\succ,\left(A,a\right)\right)>0$ only if $\succ\in\cT\left(a,A\right)$, such that
\begin{align*}
\sum_{\left(A,a\right)\in\cG}f\left(\succ,\left(A,a\right)\right)
&=\nu\left(\succ\right)
&&\text{for every }\succ\in\cL,\\
\sum_{\succ\in\cL}f\left(\succ,\left(A,a\right)\right)
&=d\left(A,a\right)
&&\text{for every }\left(A,a\right)\in\cG,
\end{align*}
if and only if inequality \eqref{eq:hall} holds for every $\cE\subseteq\cG$.
\end{lemma}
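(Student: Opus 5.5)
The statement is the specialization of Gale's feasibility theorem recorded in the footnote, so I will prove it directly as a bipartite flow-feasibility result, using max-flow/min-cut (equivalently, LP duality). The bipartite graph has sources $\cL$ with supplies $\nu(\succ)$ and sinks $\cG$ with demands $d(A,a)$. An edge joins $\succ$ to $(A,a)$ exactly when $\succ\in\cT(a,A)$. Before starting, I record the balance condition $\sum_{\succ}\nu(\succ)=1=\sum_{(A,a)\in\cG}d(A,a)$, which holds because $\rho(\cdot\mid A)\in\Delta(A)$ and $\mu\in\Delta(\cD)$.

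\emph{Necessity.} Suppose a feasible $f$ exists and fix $\cE\subseteq\cG$. Every unit of demand in a cell $(A,a)\in\cE$ is supplied by some order in $\cT(a,A)$, so
\[
\sum_{(A,a)\in\cE}d(A,a)=\sum_{(A,a)\in\cE}\sum_{\succ\in\cT(a,A)}f(\succ,(A,a)).
\]
The right-hand side is at most $\sum_{\succ\in N(\cE)}\sum_{(A',a')\in\cG}f(\succ,(A',a'))=\nu(N(\cE))$, where $N(\cE)=\bigcup_{(A,a)\in\cE}\cT(a,A)$. This uses only $f\ge 0$, and it is exactly \eqref{eq:hall}.

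\emph{Sufficiency.} I build a network with a source $s$ and a sink $t$. Arcs $s\to\succ$ get capacity $\nu(\succ)$, arcs $(A,a)\to t$ get capacity $d(A,a)$, and admissible arcs $\succ\to(A,a)$ get capacity $+\infty$. Any flow of value $1$ saturates every source and sink arc, because total supply and total demand both equal $1$. Restricting such a flow to the middle arcs therefore gives the required $f$. By max-flow/min-cut, which holds for real capacities on a finite network via LP duality, it is enough to show that every finite $s$–$t$ cut has capacity at least $1$.

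Take a cut $(\{s\}\cup P\cup Q,\ \text{rest})$ with $P\subseteq\cL$ and $Q\subseteq\cG$. For the cut to be finite, no infinite arc may leave the source side, so every admissible neighbor of $P$ must lie in $Q$. The capacity is
\[
\nu(\cL\setminus P)+d(Q).
\]
Set $\cE=\cG\setminus Q$. No order in $P$ is adjacent to any cell in $\cE$, so $N(\cE)\subseteq\cL\setminus P$. Applying \eqref{eq:hall},
\[
\nu(\cL\setminus P)\ge\nu(N(\cE))\ge d(\cE)=1-d(Q).
\]
Hence the capacity is at least $1$, and a flow of value $1$ exists.

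The main obstacle is the sufficiency direction, specifically invoking max-flow/min-cut with real-valued capacities and checking that finite cuts correspond exactly to the sets $\cE$. If I want to avoid citing max-flow/min-cut, the fallback is Farkas' lemma applied to the transportation polytope. The dual certificate is a pair of potentials $u$ on $\cL$ and $v$ on $\cG$ satisfying $u(\succ)+v(A,a)\ge 0$ on admissible edges. A standard threshold (level-set) argument reduces such a certificate to a $0/1$ certificate, which is an indicator of some $\cE$ paired with its neighborhood $N(\cE)$, and this again yields \eqref{eq:hall}.
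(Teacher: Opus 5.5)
Your proof is correct, but it takes a different route from the paper. The paper gives no proof of Lemma~\ref{lem:gale} beyond citing Gale's (1957) feasibility theorem. Its footnote only checks that the hypotheses match: the balance condition $\sum_{\succ}\nu(\succ)=1=\sum_{(A,a)\in\cG}d(A,a)$, and the identification of the neighborhood of $\cE$ with $\bigcup_{(A,a)\in\cE}\cT(a,A)$. You instead reprove this bipartite case of Gale's theorem from max-flow/min-cut.

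Both of your directions go through:
\begin{itemize}
\item Necessity is the elementary counting argument.
\item Sufficiency is sound. The finite cuts are exactly those with $P\cap N(\cG\setminus Q)=\emptyset$, so inequality \eqref{eq:hall} for $\cE=\cG\setminus Q$ bounds every finite cut below by $1$. Since no flow can exceed $1$, a maximum flow saturates all source and sink arcs. Replacing $+\infty$ by any capacity above $1$ removes any worry about infinite arcs.
\end{itemize}

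The two routes buy different things. The citation buys brevity, and it places the result in the flow-feasibility tradition the paper builds on (Fiorini; Kono, Saito, and Sandroni). Your version buys self-containment and transparency. It shows that the $2^{|\cG|}$ inequalities are precisely the min-cut constraints, and that any violation is certified by a minimum cut. It also shows that membership of a given $\nu$ in $\Theta(\mu,\rho)$ can be checked by a single max-flow computation.

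In depth the two are equivalent, since Gale's theorem is itself usually derived from max-flow/min-cut or LP duality. Your Farkas fallback is only sketched (the level-set reduction is asserted rather than shown), but the max-flow argument makes it unnecessary.
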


\begin{proof}[Proof of Theorem~\ref{thm:sharp}]

With Lemma \ref{lem:gale}, the proof of the theorem consists of two steps: first, rationalizing joint distributions with marginal $\nu$ are in one-to-one correspondence with the arrays $f$ of statement (ii); second, Lemma \ref{lem:gale} converts the existence of such an array into the inequalities \eqref{eq:hall}.

\emph{(i) $\Leftrightarrow$ (ii).} The array \(f\) can be viewed as a matrix whose rows are preference types
\(\succ\in\cL\) and whose columns are observed menu--choice cells
\((A,a)\in\cG\). The entry \(f(\succ,(A,a))\) is the mass of type
\(\succ\) assigned to menu \(A\) and observed choosing \(a\). This entry can be positive only if
\(a\) is the \(\succ\)-best element of \(A\). To present it clearly, let \(\cD=\{A_1,\dots,A_J\}\), for each observed menu \(A_j\), write
\[
A_j=\{a_{j1},\dots,a_{jK_j}\}.
\]
Thus the observed menu--choice cells are
\[
\cG
=
\{(A_j,a_{jk}) : j=1,\dots,J,\ k=1,\dots,K_j\}.
\]

The transport plan \(f\) can be represented schematically as follows:
\[
\begin{array}{c|cccc|c}
&
(A_1,a_{11})
&
\cdots
&
(A_j,a_{jk})
&
\cdots
&
\text{}
\\
\hline
\succ_1
&
f(\succ_1,(A_1,a_{11}))
&
\cdots
&
f(\succ_1,(A_j,a_{jk}))
&
\cdots
&
\nu(\succ_1)
\\
\succ_2
&
f(\succ_2,(A_1,a_{11}))
&
\cdots
&
f(\succ_2,(A_j,a_{jk}))
&
\cdots
&
\nu(\succ_2)
\\
\vdots
&
\vdots
&
\ddots
&
\vdots
&
\ddots
&
\vdots
\\
\succ_{|\cL|}
&
f(\succ_{|\cL|},(A_1,a_{11}))
&
\cdots
&
f(\succ_{|\cL|},(A_j,a_{jk}))
&
\cdots
&
\nu(\succ_{|\cL|})
\\
\hline
\text{}
&
d(A_1,a_{11})
&
\cdots
&
d(A_j,a_{jk})
&
\cdots
&
1
\end{array}
\]

The entry \(f(\succ_i,(A_j,a_{jk}))\) is the mass of type \(\succ_i\)
assigned to menu \(A_j\) and observed choosing \(a_{jk}\). The row sums
require that all mass of each preference type be allocated across observed
cells:
\[
\sum_{j=1}^{J}\sum_{k=1}^{K_j}
f(\succ_i,(A_j,a_{jk}))
=
\nu(\succ_i)
\qquad
\text{for every } i=1,\dots,|\cL|.
\]
The column sums require that the total mass allocated to each observed cell
equal its observed probability:
\[
\sum_{i=1}^{|\cL|}
f(\succ_i,(A_j,a_{jk}))
=
d(A_j,a_{jk})
=
\mu(A_j)\rho(a_{jk}\mid A_j)
\qquad
\text{for every } j=1,\dots,J,\ k=1,\dots,K_j.
\]
Finally, the support restriction imposes maximization:
\[
f(\succ_i,(A_j,a_{jk}))=0
\qquad
\text{whenever }
\succ_i\notin \cT(a_{jk},A_j).
\]
Thus type \(\succ_i\) can contribute positive mass to cell
\((A_j,a_{jk})\) only if \(a_{jk}\) is the \(\succ_i\)-maximal element of
\(A_j\).

For $(ii)\Rightarrow(i)$, given $f$ as in (ii), define $\pi(\succ, A) = \sum_{a \in A} f(\succ, (A, a))$. Three checks establish that $\pi$ rationalizes $(\mu, \rho)$ with preference marginal $\nu$. First,
\[
\sum_{A \in \cD} \pi(\succ, A)
=
\sum_{(A, a) \in \cG} f\big(\succ, (A, a)\big)
=
\nu(\succ).
\]
Second,
\[
\sum_{\succ \in \cL} \pi(\succ, A)
=
\sum_{a \in A} \sum_{\succ \in \cL} f\big(\succ, (A, a)\big)
=
\sum_{a \in A} d(A, a)
=
\mu(A).
\]
Third, fix $a \in A$ and take any $\succ \in \cT(a, A)$. The support restriction implies $f(\succ, (A, b)) = 0$ for every $b \in A \setminus \{a\}$, since $\succ \notin \cT(b, A)$. Hence $\pi(\succ, A) = f(\succ, (A, a))$ for such $\succ$, and
\[
\mu(A)\, \pi\big(\cT(a, A) \mid A\big)
=
\sum_{\succ \in \cT(a, A)} \pi(\succ, A)
=
\sum_{\succ \in \cT(a, A)} f\big(\succ, (A, a)\big)
=
\sum_{\succ \in \cL} f\big(\succ, (A, a)\big)
=
d(A, a),
\]
where the third equality again uses the support restriction. Dividing by $\mu(A) > 0$ gives $\pi(\cT(a, A) \mid A) = \rho(a \mid A)$.

For $(i)\Rightarrow(ii)$, given a rationalizing $\pi$ with preference marginal $\nu$, define
\[
f\big(\succ, (A, a)\big)
=
\pi(\succ, A)\, \mathbf{1}\{\succ \in \cT(a, A)\}.
\]
The support restriction holds by construction. The column sums equal $\mu(A)\, \pi(\cT(a, A) \mid A) = d(A, a)$. For the row sums, note that for each fixed $A$ the top cones $\{\cT(a, A) : a \in A\}$ partition $\cL$, so each $\succ$ contributes $\pi(\succ, A)$ to exactly one cell $(A, a)$; summing over $A \in \cD$ gives $\nu(\succ)$.

\emph{(ii) $\Leftrightarrow$ (iii).} This equivalence is Lemma~\ref{lem:gale}.
\end{proof}

The first implication of Theorem~\ref{thm:sharp} is the identification of sharp bounds of $\rs$. The values of $\rs\left(b\mid B\right)$ that are consistent with the data are exactly the values $\nu\left(\cT\left(b,B\right)\right)$ generated by some $\nu\in\Theta\left(\mu,\rho\right)$; the next corollary shows that this set of values is a closed interval and identifies its endpoints.

\begin{corollary}\label{cor:bounds}
For any nonempty $B\subseteq X$ and any $b\in B$, the identified set for $\rs\left(b\mid B\right)$ is the closed interval
\[
\rs\left(b\mid B\right)\in
\left[
\min_{\nu\in\Theta\left(\mu,\rho\right)}
\nu\left(\cT\left(b,B\right)\right),
\quad
\max_{\nu\in\Theta\left(\mu,\rho\right)}
\nu\left(\cT\left(b,B\right)\right)
\right].
\]
Both endpoints are attained and are values of finite-dimensional linear programs.
\end{corollary}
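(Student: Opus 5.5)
The plan is to show that $\Theta(\mu,\rho)$ is a nonempty, convex, compact polytope in the finite-dimensional simplex $\Delta(\cL)\subset\mathbb{R}^{|\cL|}$. The map $\nu\mapsto\nu(\cT(b,B))=\sum_{\succ\in\cT(b,B)}\nu(\succ)$ is linear. Its image of such a set is a closed bounded interval whose endpoints are attained and equal the optimal values of linear programs over $\Theta(\mu,\rho)$. By definition, the identified set for $\rs(b\mid B)$ is exactly $\{\nu(\cT(b,B)):\nu\in\Theta(\mu,\rho)\}$, since any $\nu$ compatible with the data arises from some rationalizing $\pi$, and conversely.

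\emph{Step 1 (polyhedral description).} Use Theorem~\ref{thm:sharp}, (i)$\Leftrightarrow$(iii). It gives
\[
\Theta(\mu,\rho)=\Delta(\cL)\cap\bigcap_{\cE\subseteq\cG}\Big\{\nu:\sum_{(A,a)\in\cE}d(A,a)\le\nu\Big(\bigcup_{(A,a)\in\cE}\cT(a,A)\Big)\Big\}.
\]
This is a finite intersection of closed half-spaces with the simplex, since $\cG$ is finite and $\cL$ is finite. Hence $\Theta(\mu,\rho)$ is a polytope, so it is convex and compact.

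\emph{Step 2 (nonemptiness).} By Theorem~\ref{thm:anything}, $\Theta(\mu,\rho)\neq\emptyset$. This matters, because otherwise the min and max would be over an empty set.

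\emph{Step 3 (interval and attainment).} A linear function on a nonempty compact set attains its minimum and maximum, by Weierstrass. The image of a convex set under a linear map is convex, so the image is a convex subset of $\mathbb{R}$ containing both extremes. It is therefore exactly the closed interval between them. Every intermediate value $t$ is realized by the convex combination $\lambda\nu_{\min}+(1-\lambda)\nu_{\max}\in\Theta(\mu,\rho)$ with the appropriate $\lambda$.

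\emph{Step 4 (LP form).} Minimizing or maximizing $\sum_{\succ\in\cT(b,B)}\nu(\succ)$ subject to the inequalities of Step 1 is a finite-dimensional linear program. A more economical alternative is to use the transport variables $f$ of Theorem~\ref{thm:sharp}(ii): the variables are $f\ge0$ with the support restriction and the column-sum constraints, and $\nu$ is defined by the row sums. This yields an LP with $|\cL|\cdot|\cG|$ variables, and it is feasible by Theorem~\ref{thm:anything}.

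I expect no serious obstacle. The only points needing care are nonemptiness, supplied by Theorem~\ref{thm:anything}, and the observation that the definition applies to arbitrary $B$, including $B\notin\cD$. That case causes no difficulty: $\cT(b,B)$ is still a well-defined subset of $\cL$, so the objective remains linear in $\nu$.
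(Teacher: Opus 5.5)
Your proposal is correct and follows essentially the same route as the paper. Theorem~\ref{thm:sharp}(iii) makes $\Theta(\mu,\rho)$ a compact convex polytope, Theorem~\ref{thm:anything} makes it nonempty, linearity of $\nu\mapsto\nu(\cT(b,B))$ then gives a closed interval with attained endpoints, and the flow formulation of Theorem~\ref{thm:sharp}(ii) supplies the linear programs (your explicit convex-combination step and the remark on $B\notin\cD$ only spell out what the paper leaves implicit).
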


\begin{proof}
For any \(\nu\in\Theta(\mu,\rho)\), the counterfactual choice probability is
\[
\rho^*(b\mid B)=\nu(\cT(b,B)).
\]
Hence the identified set for \(\rho^*(b\mid B)\) is
\[
\left\{\nu(\cT(b,B)):\nu\in\Theta(\mu,\rho)\right\}.
\]
By Theorem~\ref{thm:sharp}(iii), $\Theta(\mu,\rho)$ is defined by finitely
many weak linear inequalities within $\Delta(\cL)$, hence it is a compact
convex polytope; it is nonempty by Theorem~\ref{thm:anything}. Because
\(\Theta(\mu,\rho)\) is compact and convex, and because \(\nu(\cT(b,B))\)
is linear in \(\nu\), this set is a closed interval. Its endpoints are
obtained by minimizing and maximizing \(\nu(\cT(b,B))\) over all
\(\nu\in\Theta(\mu,\rho)\). The endpoints are attained because
$\Theta(\mu,\rho)$ is compact, and they can be computed as
finite-dimensional linear programs using the flow formulation in
Theorem~\ref{thm:sharp}(ii).
\end{proof}

The interval in Corollary~\ref{cor:bounds} contains every value of $\rs\left(b\mid B\right)$ that some rationalizing joint distribution can generate, so the data and rationality alone can never justify a narrower interval. An attractive and interesting feature of Corollary~\ref{cor:bounds} is that it does not require the observed cell mass $d(B,b)=\mu(B)\rho(b\mid B)$ to be positive. If $\mu(B)=0$ (i.e., $B\notin\cD$), no DM is observed facing menu $B$ and $\rho(b\mid B)$ is not available. The absence of menu $B$ from the data does not imply that $\rho^{*}(b\mid B)=0$: no one was assigned $B$, but some members of the population may still be types who would choose $b$ if $B$ were assigned. Such types may be partially revealed by their choices from other observed menus. The bounds then answer an extrapolation question in the spirit of Manski (2007), but from selection-contaminated data. If $\rho(b\mid B)=0$, no DM who is assigned menu $B$ chooses $b$. But under endogenous menu assignment, this need not mean that no one in the population would choose $b$ from $B$: such preference types may exist, but may be selected into other menus. Thus $\rho^{*}(b\mid B)$ can be positive even when $\rho(b\mid B)=0$, and the corollary still gives the sharp set of such counterfactual values.

The second implication of Theorem~\ref{thm:sharp} concerns the assignment mechanism itself: the same optimization that bounds $\rs$ also bounds the degree of endogeneity in the data. For any rationalizing $\pi$ with preference marginal $\nu$, define the \emph{steering probability}
\[
\pi\left(A\mid\cT\left(a,A\right)\right)
=
\frac{
\sum_{\succ\in\cT\left(a,A\right)}\pi\left(\succ,A\right)
}{
\nu\left(\cT\left(a,A\right)\right)
},
\]
the probability that a DM whose preference type ranks $a$ at the top of $A$ is assigned menu $A$. The steering probability has a natural benchmark: under independent assignment, $\pi=\nu\otimes\mu$, it equals $\mu\left(A\right)$ for every $\left(A,a\right)$. Values above $\mu\left(A\right)$ therefore reveal that the types who favor $a$ within $A$ are over-represented among those shown $A$; in the motivating example, this is precisely how the platform's algorithm steers sushi lovers toward the augmented feed.

The steering probability depends on $\pi$ only through $\nu\left(\cT\left(a,A\right)\right)$, and is therefore bounded by the same program as in Corollary~\ref{cor:bounds}.\footnote{For steering, unlike for $\rho^{*}(a\mid A)$ itself, I require $d(A,a)>0$. This guarantees that the conditioning event $\cT(a,A)$ has positive population mass in every rationalization, since $\rho^{*}(a\mid A)=\nu(\cT(a,A))\ge d(A,a)>0$. If $d(A,a)=0$, the data may be consistent either with no such types in the population, in which case $\pi(A\mid\cT(a,A))$ is undefined, or with such types existing but never being assigned menu $A$, in which case the steering probability is zero. Thus the ratio formula is well defined uniformly over the identified set only for positive observed cells.}

\begin{corollary}\label{cor:steering-bounds}
Fix $A\in\cD$ and $a\in A$ with $d\left(A,a\right)=\mu\left(A\right)\rho\left(a\mid A\right)>0$. Then every rationalizing $\pi$ with preference marginal $\nu$ satisfies
\[
\pi\left(A\mid\cT\left(a,A\right)\right)
=
\frac{d\left(A,a\right)}{\nu\left(\cT\left(a,A\right)\right)},
\]
and the identified set for $\pi\left(A\mid\cT\left(a,A\right)\right)$ is the closed interval
\[
\left[
\frac{
d\left(A,a\right)
}{
\max_{\nu\in\Theta\left(\mu,\rho\right)}
\nu\left(\cT\left(a,A\right)\right)
},
\quad
\frac{
d\left(A,a\right)
}{
\min_{\nu\in\Theta\left(\mu,\rho\right)}
\nu\left(\cT\left(a,A\right)\right)
}
\right].
\]
Both endpoints are attained.
\end{corollary}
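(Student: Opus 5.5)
The plan has three steps: derive the formula for the steering probability, reduce its identified set to that of $\nu(\cT(a,A))$, and then transform the interval.

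First, the formula. Fix a rationalizing $\pi$ with preference marginal $\nu$. The numerator of the steering probability is $\sum_{\succ\in\cT(a,A)}\pi(\succ,A)=\mu(A)\,\pi(\cT(a,A)\mid A)$, and by the rationalization condition in Definition~\ref{def:rat} this equals $\mu(A)\rho(a\mid A)=d(A,a)$. For the denominator we need $\nu(\cT(a,A))>0$. Since the numerator is at most $\sum_{A'\in\cD}\sum_{\succ\in\cT(a,A)}\pi(\succ,A')=\nu(\cT(a,A))$, we get $\nu(\cT(a,A))\ge d(A,a)>0$. The ratio is therefore well defined and equals $d(A,a)/\nu(\cT(a,A))$. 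This also shows that every $\nu\in\Theta(\mu,\rho)$ gives $\nu(\cT(a,A))\ge d(A,a)>0$, which is the case $\cE=\{(A,a)\}$ of \eqref{eq:hall}.

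Second, the identified set. The steering probability depends on $\pi$ only through $\nu_\pi$, and the set of attainable $\nu_\pi$ is exactly $\Theta(\mu,\rho)$. Hence the identified set for $\pi(A\mid\cT(a,A))$ is the image of
\[
\{\nu(\cT(a,A)):\nu\in\Theta(\mu,\rho)\}
\]
under the map $t\mapsto d(A,a)/t$. By Corollary~\ref{cor:bounds} (with $B=A$, $b=a$), this set is a closed interval $[m,M]$ with both endpoints attained. Because every element is at least $d(A,a)$, we have $m\ge d(A,a)>0$.

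Third, the transformation. On $(0,\infty)$ the map $t\mapsto d(A,a)/t$ is continuous and strictly decreasing. It therefore sends $[m,M]$ bijectively onto $[d(A,a)/M,\,d(A,a)/m]$. The endpoints are attained by the same $\nu$ (and the corresponding rationalizing $\pi$) that attain $M$ and $m$.

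The only delicate point is ensuring that the denominator is bounded away from zero uniformly over $\Theta(\mu,\rho)$, so that the reciprocal map is well behaved. The first step handles this using the hypothesis $d(A,a)>0$; everything else follows directly from Corollary~\ref{cor:bounds}.
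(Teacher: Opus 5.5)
Your proof is correct and follows essentially the paper's argument. You compute the numerator as $d(A,a)$ from the rationalization condition, show $\nu(\cT(a,A))\ge d(A,a)>0$, and push the interval of Corollary~\ref{cor:bounds} through the decreasing map $t\mapsto d(A,a)/t$. The only cosmetic difference is that you get the lower bound on the denominator directly from the definition of the preference marginal, whereas the paper cites \eqref{eq:hall} with $\cE=\{(A,a)\}$; these are the same inequality.
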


\begin{proof}
Fix a rationalizing $\pi$ with preference marginal $\nu$. By the
definition of the steering probability, its numerator is
$\sum_{\succ\in\cT\left(a,A\right)}\pi\left(\succ,A\right)
=\mu\left(A\right)\,\pi\big(\cT\left(a,A\right)\mid A\big)
=\mu\left(A\right)\rho\left(a\mid A\right)=d\left(A,a\right)$, where the
second equality holds because $\pi$ rationalizes $\left(\mu,\rho\right)$.
Moreover, the denominator is positive: applying \eqref{eq:hall} to the
single cell $\cE=\left\{\left(A,a\right)\right\}$ gives
$\nu\left(\cT\left(a,A\right)\right)\ge d\left(A,a\right)>0$. This proves
the displayed identity and shows that the steering probability depends on
$\pi$ only through its preference marginal $\nu$.
Consequently, the identified set for
$\pi\left(A\mid\cT\left(a,A\right)\right)$ is
$\left\{d\left(A,a\right)/\nu\left(\cT\left(a,A\right)\right)
:\nu\in\Theta\left(\mu,\rho\right)\right\}$. By
Corollary~\ref{cor:bounds}, as $\nu$ ranges over
$\Theta\left(\mu,\rho\right)$, the denominator
$\nu\left(\cT\left(a,A\right)\right)$ takes every value in the closed
interval between its minimum and its maximum. Dividing
$d\left(A,a\right)$ by $\nu\left(\cT\left(a,A\right)\right)$ gives the stated interval, with both endpoints attained by the
distributions that attain the endpoints in Corollary~\ref{cor:bounds}.
\end{proof}

Corollary~\ref{cor:steering-bounds} turns the bounds of Corollary~\ref{cor:bounds} into a diagnostic for menu endogeneity. Dividing by the benchmark $\mu\left(A\right)$ gives
\[
\frac{\pi\left(A\mid\cT\left(a,A\right)\right)}{\mu\left(A\right)}
=
\frac{\rho\left(a\mid A\right)}{\nu\left(\cT\left(a,A\right)\right)}
=
\frac{\rho\left(a\mid A\right)}{\rs\left(a\mid A\right)},
\]
so the steering probability exceeds its independence benchmark exactly when the observed choice probability exceeds the counterfactual one. If the lower endpoint of the identified interval exceeds $\mu\left(A\right)$, the data alone establish that menu $A$ was disproportionately assigned to the types who favor $a$ within it, without any assumption on the assignment rule.

\subsection{Sharp Bounds in Closed Form}\label{sec:closed-form}

Corollary~\ref{cor:bounds} characterizes the sharp bounds as the values of
linear programs, but these programs are enormous. The unknown is the vector
of type probabilities $\left(\nu\left(\succ\right)\right)_{\succ\in\cL}$,
which has $n!$ entries, and by Theorem~\ref{thm:sharp} the feasible set is
described by one linear inequality for every collection of observed cells
$\cE\subseteq\cG$, of which there are $2^{\left|\cG\right|}$. In this
subsection I show that the bounds have a closed-form expression.

The key object is the set function that aggregates the evidence the data provide about any given set of preference types. For $\cR\subseteq\cL$, define
\[
r\left(\cR\right)
=
\sum_{\left(A,a\right)\in\cG:\ \cT\left(a,A\right)\subseteq\cR}
d\left(A,a\right),
\]
with $r\left(\emptyset\right)=0$ and $r\left(\cL\right)=1$.\footnote{Each top cone is nonempty, so no cell contributes to $r\left(\emptyset\right)$; every cell contributes to $r\left(\cL\right)$, whose value is $\sum_{A\in\cD}\mu\left(A\right)\sum_{a\in A}\rho\left(a\mid A\right)=1$.}

\begin{lemma}[Supermodularity]\label{lem:super}
The function $r$ is supermodular: for all $\cR,\cR'\subseteq\cL$,
\[
r\left(\cR\cup\cR'\right)
+
r\left(\cR\cap\cR'\right)
\ge
r\left(\cR\right)
+
r\left(\cR'\right).
\]
\end{lemma}

\begin{proof}
For every $\cR\subseteq\cL$, the definition of $r$ can be rewritten with the summation running over all cells,
\[
r\left(\cR\right)
=
\sum_{\left(A,a\right)\in\cG}
d\left(A,a\right)\,
\1\left\{\cT\left(a,A\right)\subseteq\cR\right\},
\]
since a cell whose cone is not contained in $\cR$ contributes zero. Hence, for any $\cR,\cR'\subseteq\cL$,
\[
r\left(\cR\cup\cR'\right)+r\left(\cR\cap\cR'\right)
-r\left(\cR\right)-r\left(\cR'\right)
=
\sum_{\left(A,a\right)\in\cG}
d\left(A,a\right)\,
\delta\left(A,a\right),
\]
where, for each cell $\left(A,a\right)\in\cG$,
\begin{align*}
\delta\left(A,a\right)
={}&
\1\left\{\cT\left(a,A\right)\subseteq\cR\cup\cR'\right\}
+\1\left\{\cT\left(a,A\right)\subseteq\cR\cap\cR'\right\}\\
&-\1\left\{\cT\left(a,A\right)\subseteq\cR\right\}
-\1\left\{\cT\left(a,A\right)\subseteq\cR'\right\}.
\end{align*}
Since $d\left(A,a\right)\ge 0$, it suffices to show that $\delta\left(A,a\right)\ge 0$ for every cell. If $\cT\left(a,A\right)$ is contained in both $\cR$ and $\cR'$, then it is contained in $\cR\cap\cR'$ and in $\cR\cup\cR'$, so $\delta\left(A,a\right)=1+1-1-1=0$. If $\cT\left(a,A\right)$ is contained in exactly one of $\cR$ and $\cR'$, then it is contained in $\cR\cup\cR'$, so $\delta\left(A,a\right)\ge 1+0-1-0=0$. If $\cT\left(a,A\right)$ is contained in neither, both subtracted indicators are zero, so $\delta\left(A,a\right)\ge 0$.
\end{proof}

The interpretation of $r$ is as follows. Under maximization, the mass $d\left(A,a\right)$ of cell $\left(A,a\right)$ can only be supplied by types in the top cone $\cT\left(a,A\right)$: the data locate this mass inside the cone, but are silent about which types within the cone supply it. The value $r\left(\cR\right)$ then collects the mass of every cell whose cone lies entirely inside $\cR$. Any preference distribution consistent with the data must give $\cR$ at least this much mass, $\nu\left(\cR\right)\ge r\left(\cR\right)$, and the two quantities relevant for the bounds are the values of $r$ at a top cone and at its complement. With $r$ being supermodular, I shall use a corollary of the theorem due to Shapley (1971), adapted for my
setting as follows.\footnote{Shapley's (1971) theorem states that for a supermodular $w:2^{N}\to\mathbb{R}$ with $w\left(\emptyset\right)=0$ on a finite set $N$, the set of vectors $x\in\mathbb{R}^{N}$ with $\sum_{i\in N}x_{i}=w\left(N\right)$ and $\sum_{i\in S}x_{i}\ge w\left(S\right)$ for all $S\subseteq N$ is nonempty, and $\min\sum_{i\in S}x_{i}=w\left(S\right)$ and $\max\sum_{i\in S}x_{i}=w\left(N\right)-w\left(N\setminus S\right)$ over this set, both attained. Lemma~\ref{lem:shapley} is the case $N=\cL$, $w=r$: since $r\left(\left\{\succ\right\}\right)\ge 0$ for every $\succ$ and $r\left(\cL\right)=1$, the vectors in question are exactly the probability distributions dominating $r$.}

\begin{lemma}[Shapley, 1971]\label{lem:shapley}
For a supermodular $r$ with $r\left(\emptyset\right)=0$ and $r\left(\cL\right)=1$, the set
\[
\operatorname{core}\left(r\right)
=
\left\{
\nu\in\Delta\left(\cL\right)
:
\nu\left(\cR\right)\ge r\left(\cR\right)
\ \text{for every }\cR\subseteq\cL
\right\}
\]
is nonempty, and for every $\cR\subseteq\cL$,
\[
\min_{\nu\in\operatorname{core}\left(r\right)}\nu\left(\cR\right)
=
r\left(\cR\right),
\qquad
\max_{\nu\in\operatorname{core}\left(r\right)}\nu\left(\cR\right)
=
1-r\left(\cL\setminus\cR\right),
\]
and both values are attained.
\end{lemma}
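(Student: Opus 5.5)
The plan is the standard greedy (marginal-vector) construction for convex games, which proves nonemptiness and the min/max formulas together. First, the candidates are feasible: any vector built from marginal contributions of $r$ along a chain of sets lies in $\operatorname{core}(r)$. Second, a suitable chain gives one such vector that puts exactly $r(\cR)$ on $\cR$.

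\emph{Construction.} Enumerate $\cL=\{\succ_1,\dots,\succ_m\}$ in any order. Write $S_k=\{\succ_1,\dots,\succ_k\}$, with $S_0=\emptyset$. Define
\[
\nu(\succ_k)=r(S_k)-r(S_{k-1}).
\]
These marginals are nonnegative. Supermodularity with $r(\emptyset)=0$ gives monotonicity here, and in fact directly, since $r$ is a nonnegative sum of indicators $\1\{\cT(a,A)\subseteq\cR\}$, each monotone in $\cR$. The sum telescopes to $r(\cL)=1$, so $\nu\in\Delta(\cL)$.

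\emph{Core inequalities.} This is the main step. Take any $\cR$, list its elements in the enumeration order as $\succ_{k_1},\dots,\succ_{k_p}$, and set $R_j=\{\succ_{k_1},\dots,\succ_{k_j}\}$. Since $R_{j-1}\subseteq S_{k_j-1}$, apply Lemma~\ref{lem:super} to the sets $R_j$ and $S_{k_j-1}$. Their union is $S_{k_j-1}\cup\{\succ_{k_j}\}=S_{k_j}$ and their intersection is $R_{j-1}$. This yields
\[
r(R_j)-r(R_{j-1})\le r(S_{k_j})-r(S_{k_j-1})=\nu(\succ_{k_j}).
\]
Summing over $j$ gives $r(\cR)\le\nu(\cR)$. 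So the core is nonempty, which settles the first claim.

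\emph{Minimum.} Every core element satisfies $\nu(\cR)\ge r(\cR)$ by definition. For attainment, choose an enumeration that lists the elements of $\cR$ first. Then $\cR=S_{|\cR|}$, and the telescoping sum gives $\nu(\cR)=r(\cR)$ exactly.

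\emph{Maximum.} For any core element, $\nu(\cR)=1-\nu(\cL\setminus\cR)\le 1-r(\cL\setminus\cR)$. For attainment, list the elements of $\cL\setminus\cR$ first. The greedy vector then gives $\nu(\cL\setminus\cR)=r(\cL\setminus\cR)$, so the bound holds with equality.

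The main obstacle is the core-inequality step: the union and intersection must be correct precisely because the elements of $\cR$ are processed in the fixed enumeration order. Alternatively, since Shapley (1971) is a classical result, one could simply cite it as in the footnote. In that case the only check is that Shapley's vectors $x\in\mathbb{R}^{\cL}$ are probability distributions. This holds because $x(\{\succ\})\ge r(\{\succ\})\ge 0$ and the total is $r(\cL)=1$.
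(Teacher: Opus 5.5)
Your proof is correct. It differs from the paper mainly in being self-contained: the paper gives no argument for this lemma and simply invokes Shapley's (1971) theorem on cores of convex games. Its only added check, made in the footnote, is the one in your last paragraph: $r(\{\succ\})\ge 0$ and $r(\cL)=1$ make Shapley's core vectors exactly the probability distributions dominating $r$.

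Your greedy marginal-vector argument is essentially the classical proof behind that citation, and the details check out:
\begin{itemize}
\item Applying supermodularity to $R_j$ and $S_{k_j-1}$ gives union $S_{k_j}$ and intersection $R_{j-1}$, as you claim.
\item Listing $\cR$ first, or $\cL\setminus\cR$ first, attains the minimum and the maximum respectively.
\end{itemize}

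Your route gives an explicit description of distributions that attain the endpoints, which shows concretely which $\nu\in\Theta(\mu,\rho)$ attain the bounds used in Theorem~\ref{thm:closedform}. The citation is shorter and also yields the fuller structure of the core: for convex games its vertices are exactly the marginal vectors.

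One caution concerns your remark that supermodularity with $r(\emptyset)=0$ ``gives monotonicity.'' That is not true in general. Supermodularity only gives $r(S\cup\{\succ\})-r(S)\ge r(\{\succ\})$, so you also need $r(\{\succ\})\ge 0$. Without it the lemma as literally stated fails. On two points, $r(\{1\})=-1$, $r(\{2\})=0$, $r(\{1,2\})=1$ is supermodular, yet the core inside the simplex is the whole simplex, so $\min\nu(\{1\})=0\neq r(\{1\})$. What actually secures nonnegativity of your marginals is your direct argument from the indicator representation of the paper's specific $r$. That mirrors the paper's footnote, which likewise relies on $r(\{\succ\})\ge 0$ for this particular $r$.
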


Lemma~\ref{lem:shapley} gives the range of $\nu\left(\cR\right)$ over $\operatorname{core}\left(r\right)$. The main result of this subsection provides the closed forms of the bounds in Corollary \ref{cor:bounds}.

\begin{theorem}[Closed-form bounds]\label{thm:closedform}
Fix $A\subseteq X$ with $\left|A\right|\ge 2$ and $a\in A$. Then:
\begin{enumerate}
\item[(i)] the identified set coincides with the core:
$\Theta\left(\mu,\rho\right)=\operatorname{core}\left(r\right)$;
\item[(ii)] the sharp bounds on the counterfactual choice probability are
\[
\rs\left(a\mid A\right)
\in
\Big[\,
r\left(\cT\left(a,A\right)\right),
\;
1-r\left(\cL\setminus\cT\left(a,A\right)\right)
\Big];
\]
\item[(iii)] if moreover $A\in\cD$ and $d\left(A,a\right)>0$, the sharp bounds on the steering probability are
\begin{align*}
    \pi\left(A\mid\cT\left(a,A\right)\right)
\in
\left[
\frac{d\left(A,a\right)}{1-r\left(\cL\setminus\cT\left(a,A\right)\right)},
\;
\frac{d\left(A,a\right)}{r\left(\cT\left(a,A\right)\right)}
\right],
\end{align*}

\end{enumerate}
where
\begin{align*}
 r\left(\cT\left(a,A\right)\right) 
&=\sum_{B\in\cD:\,B\supseteq A}d\left(B,a\right),\\
r\left(\cL\setminus\cT\left(a,A\right)\right) 
&=\sum_{B\in\cD:\,a\in B}\ \sum_{b\in\left(A\cap B\right)\setminus\left\{a\right\}}d\left(B,b\right).
\end{align*}
\end{theorem}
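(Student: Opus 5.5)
The proof has three parts: an equality of polytopes, a direct application of Lemma~\ref{lem:shapley}, and a combinatorial computation of $r$ at a top cone and at its complement.

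\emph{Part (i).} We show $\Theta(\mu,\rho)=\operatorname{core}(r)$ using the characterization in Theorem~\ref{thm:sharp}(iii).

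Take $\nu\in\Theta(\mu,\rho)$ and any $\cR\subseteq\cL$. Let $\cE=\{(A,a)\in\cG:\cT(a,A)\subseteq\cR\}$. The union of the cones in $\cE$ lies in $\cR$. Inequality \eqref{eq:hall} then gives $r(\cR)\le\nu(\cup_{\cE}\cT)\le\nu(\cR)$, so $\nu\in\operatorname{core}(r)$.

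Conversely, take $\nu\in\operatorname{core}(r)$ and any $\cE\subseteq\cG$. Set $\cR=\bigcup_{(A,a)\in\cE}\cT(a,A)$. Every cell in $\cE$ has its cone inside $\cR$, so $\sum_{\cE}d\le r(\cR)\le\nu(\cR)$. This is \eqref{eq:hall}.

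\emph{Part (ii).} By Lemma~\ref{lem:super}, $r$ is supermodular, and $r(\emptyset)=0$, $r(\cL)=1$. Lemma~\ref{lem:shapley} applied with $\cR=\cT(a,A)$, combined with (i) and Corollary~\ref{cor:bounds}, gives the endpoints $r(\cT(a,A))$ and $1-r(\cL\setminus\cT(a,A))$.

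\emph{Part (iii).} Substitute these endpoints into Corollary~\ref{cor:steering-bounds}. The lower endpoint $r(\cT(a,A))$ is at least $d(A,a)>0$, because the cell $(A,a)$ itself contributes to it. So all denominators are positive.

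\emph{Closed forms for $r$.} This is the main obstacle: it requires characterizing cone containment combinatorially.

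\textbf{Claim 1:} $\cT(b,B)\subseteq\cT(a,A)$ iff $b=a$ and $B\supseteq A$.

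For sufficiency: if $a$ tops the larger set $B$, it tops $A$.

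For necessity, argue by contradiction in two cases.
\begin{itemize}
\item If $b\neq a$, pick an order ranking some $c\in A\setminus\{a\}$ first and $b$ second. When $c\in B$ this order must be adjusted; it is cleaner to rank $b$ first and then some $c\in A\setminus\{a\}$ above $a$. Since $|A|\ge2$ such $c$ exists, and this order lies in $\cT(b,B)$ but not in $\cT(a,A)$. (If $b\in A$, take $c=b$.)
\item If $b=a$ but some $c\in A\setminus B$, rank $c\succ a\succ\text{rest}$. This order lies in $\cT(a,B)$ but not in $\cT(a,A)$.
\end{itemize}
Summing $d$ over the cells satisfying Claim 1 gives the first formula.

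\textbf{Claim 2:} $\cT(b,B)\cap\cT(a,A)=\emptyset$ iff $a\in B$ and $b\in(A\cap B)\setminus\{a\}$.

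Containment in the complement of $\cT(a,A)$ means disjointness from it.

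For sufficiency: if $a,b\in A\cap B$, then $b$ topping $B$ forces $b\succ a$, so $a$ cannot top $A$.

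For necessity, construct an order in both cones whenever the condition fails.
\begin{itemize}
\item If $b=a$, rank $a$ first.
\item If $b\notin A$, rank $b$ first and $a$ second.
\item If $b\in A\setminus\{a\}$ and $a\notin B$, rank $a$ first and $b$ second. Then $a$ tops $A$, and $b$ tops $B$ because $a\notin B$.
\end{itemize}
Summing $d$ over the cells satisfying Claim 2 gives the second formula. The necessity constructions in both claims are the only delicate points, and I would check each case carefully there.
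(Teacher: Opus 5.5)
Your proposal is correct and follows the paper's proof essentially step for step: the same two-sided argument for (i) via inequality \eqref{eq:hall}, Lemma~\ref{lem:shapley} combined with Corollary~\ref{cor:bounds} for (ii), Corollary~\ref{cor:steering-bounds} for (iii), and the same cone-containment and cone-disjointness characterizations with the same witness orders for the closed forms. The only difference is cosmetic: in the $b\neq a$ case of Claim 1 the paper simply ranks $b$ first and $a$ last, which is the clean form of the construction you reach after your false start, so delete the abandoned ``$c$ first, $b$ second'' attempt.
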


\begin{proof}
\emph{For (i).}
Let $\nu\in\Theta\left(\mu,\rho\right)$ and fix $\cR\subseteq\cL$. The cells whose cones lie in $\cR$ form the collection $\cE_{\cR}=\left\{\left(B,b\right)\in\cG:\cT\left(b,B\right)\subseteq\cR\right\}$, whose total mass is $r\left(\cR\right)$ by definition, and whose cones have union contained in $\cR$. Inequality \eqref{eq:hall} of Theorem~\ref{thm:sharp}, applied to $\cE_{\cR}$, then gives $r\left(\cR\right)\le\nu\left(\cR\right)$, so $\nu\in\operatorname{core}\left(r\right)$. Conversely, let $\nu\in\operatorname{core}\left(r\right)$ and fix any $\cE\subseteq\cG$. Its cones have union $\cR_{\cE}=\bigcup_{\left(B,b\right)\in\cE}\cT\left(b,B\right)$, and every cell of $\cE$ has its cone contained in $\cR_{\cE}$, so $\cE\subseteq\cE_{\cR_{\cE}}$ and hence $\sum_{\left(B,b\right)\in\cE}d\left(B,b\right)\le r\left(\cR_{\cE}\right)\le\nu\left(\cR_{\cE}\right)$, which is \eqref{eq:hall} for $\cE$. Since $\cE$ was arbitrary, Theorem~\ref{thm:sharp} gives $\nu\in\Theta\left(\mu,\rho\right)$.

\emph{For (ii).}
By Corollary~\ref{cor:bounds}, the sharp bounds on $\rs\left(a\mid A\right)$ are the minimum and the maximum of $\nu\left(\cT\left(a,A\right)\right)$ over $\Theta\left(\mu,\rho\right)$, which by (i) and Lemma~\ref{lem:shapley} equal $r\left(\cT\left(a,A\right)\right)$ and $1-r\left(\cL\setminus\cT\left(a,A\right)\right)$, both attained. 

\emph{For (iii).} By Corollary~\ref{cor:steering-bounds}, every
rationalization satisfies
$\pi\left(A\mid\cT\left(a,A\right)\right)
=d\left(A,a\right)/\nu\left(\cT\left(a,A\right)\right)$. By (ii),
as $\nu$ ranges over $\Theta\left(\mu,\rho\right)$, the denominator
$\nu\left(\cT\left(a,A\right)\right)$ ranges from $r\left(\cT\left(a,A\right)\right)$ to $1-r\left(\cL\setminus\cT\left(a,A\right)\right)$, and this interval
excludes zero: the cell $\left(A,a\right)$ is one term of the first sum,
so $r\left(\cT\left(a,A\right)\right)\ge d\left(A,a\right)>0$. Dividing
$d\left(A,a\right)$ by $\nu\left(\cT\left(a,A\right)\right)$ gives the stated interval.

It remains to compute the two values of $r$.

For $r\left(\cT\left(a,A\right)\right)=\sum_{B\in\cD:\,B\supseteq A}d\left(B,a\right),$ I now prove that $\cT\left(b,B\right)\subseteq\cT\left(a,A\right)$ iff $b=a$ and $B\supseteq A$. For $\Leftarrow$, if $\succ$ ranks $a$ on top of $B$, it ranks $a$ on top of $A\subseteq B$. For $\Rightarrow$ part, if $b\ne a$, any order ranking $b$ first and $a$ last lies in $\cT\left(b,B\right)$ but not in $\cT\left(a,A\right)$, since $\left|A\right|\ge 2$. If $b=a$ but $B\not\supseteq A$, take $x\in A\setminus B$: any order ranking $x$ first and $a$ second lies in $\cT\left(a,B\right)$, as $x\notin B$, but not in $\cT\left(a,A\right)$, as $x\in A$. Hence the cells with cone inside $\cT\left(a,A\right)$ are exactly $\left\{\left(B,a\right):B\in\cD,\,B\supseteq A\right\}$, and summing their masses gives the first sum.

For $r\left(\cL\setminus\cT\left(a,A\right)\right)=\sum_{B\in\cD:\,a\in B}\ \sum_{b\in\left(A\cap B\right)\setminus\left\{a\right\}}d\left(B,b\right)$, note that $\cT\left(b,B\right)\subseteq\cL\setminus\cT\left(a,A\right)$ means $\cT\left(b,B\right)\cap\cT\left(a,A\right)=\emptyset$, and I prove that this holds iff $b\in A\setminus\left\{a\right\}$ and $a\in B$. For $\Leftarrow$ part, a common order would satisfy $a\succ b$, since $b\in A$, and $b\succ a$, since $a\in B$. For $\Rightarrow$ part, if $b=a$, any order ranking $a$ first lies in both cones. If $b\ne a$ and $b\notin A$, any order ranking $b$ first and $a$ second lies in both cones; if $b\ne a$ and $a\notin B$, any order ranking $a$ first and $b$ second does---in each case the first-ranked alternative is absent from the other menu, where the second-ranked one is therefore best. Hence the cells with cone disjoint from $\cT\left(a,A\right)$ are exactly $\left\{\left(B,b\right):B\in\cD,\,a\in B,\,b\in\left(A\cap B\right)\setminus\left\{a\right\}\right\}$, and summing their masses gives the second sum.
\end{proof}

\section{Point Identification under Outcome-Based Steering}\label{sec:pct}
The bounds in Section~\ref{sec:id} are the most the data can deliver
under complete agnosticism about the menu assignment/steering rule. In this section, I impose an economically motivated restriction on steering: the platform assigns menus according to the alternative it predicts the consumer will choose from each menu. I combine this restriction with cross market variation, assuming that the same assignment rule operates across several markets while the distribution of preferences varies across them. I then show that one can obtain full point
identification of the counterfactual choice
probabilities in each market and the steering rule
itself. Now I introduce the environment and notation.

There are $M$ markets, $m=1,\dots,M$, with population shares $\lambda_{m}>0$, $\sum_{m}\lambda_{m}=1$, and preference distributions $\nu_{m}\in\Delta\left(\cL\right)$. The population preference distribution is $\nu=\sum_{m}\lambda_{m}\nu_{m}$. Let $\mu_m$ denote the menu frequency in market $m$. Let $\cS=\bigcup_{m=1}^{M}\operatorname{supp}\left(\nu_{m}\right)$ denote
the set of preference types present in some market. In each market, menus are assigned by a rule $\pi_{m}\left(\cdot\mid\succ\right)\in\Delta\left(\cD\right)$, and the analyst observes the cross-sectional data of the baseline model,
\[
d_{m}\left(A,a\right)
=
\sum_{\succ\in\cT\left(a,A\right)}
\pi_{m}\left(A\mid\succ\right)\nu_{m}\left(\succ\right),
\qquad
\left(A,a\right)\in\cG,
\]
together with the pooled data $\bar{d}\left(A,a\right)=\sum_{m}\lambda_{m}d_{m}\left(A,a\right)$.
The first assumption requires the assignment rule to be common across markets.

\begin{assumption}[Common steering]\label{ass:pct-common}
The menu assignment rule is the same in every market:
$\pi_{i}=\pi_{k}$
for all $i,k\in\left\{1,\dots,M\right\}$.
\end{assumption}
Assumption~\ref{ass:pct-common} is the natural description of a single platform, broker, or chain deploying one menu assignment rule across the regions it serves: the rule maps a predicted preference profile to a menu in the same way everywhere, and markets differ only in the distribution of preference types. Under Assumption~\ref{ass:pct-common}, write $\pi$ for the common conditional assignment probability.

The second assumption states that given preference $\succ$, the assignment of menu $A$ is only determined by the choice that preference type $\succ$ will make from that menu.

\begin{assumption}[Outcome-based steering]\label{ass:pct}
For all $A\in\cD$ and for all $\succ,\succ'\in\cS$, if $\max\left(A,\succ\right)=\max\left(A,\succ'\right)$, we have $\pi\left(A\mid\succ\right)
=
\pi\left(A\mid\succ'\right)
>0.$
\end{assumption}

The interpretation is that any two DMs who would pick the same alternative from a menu are equally likely to be shown that menu, no matter how they rank the alternatives they would not pick. The platform's decision to show menu $A$ to a DM thus depends on the DM's type only through the predicted \emph{outcome} on $A$, I name this an outcome-based steering rule.

Assumption~\ref{ass:pct} is an intuitive restriction because it mirrors how menu steering rules are actually built. For example, a recommender system may score a feed by the predicted probability of a click or purchase from that feed; a salesman may choose which menu to present to their client based on which product the client will buy. In these cases the platform's knowledge about the DM enters the assignment of $A$ only
through the choice the DM is predicted to make from $A$.\footnote{It is easy to check that the motivating Example~\ref{sec:example} satisfies Assumption~\ref{ass:pct}.} Also, Assumption~\ref{ass:pct} nests exogenous menu assignment as a special case. Suppose that, for every menu
$A\in\cD$, we have $\pi\left(A\mid\succ\right)=\kappa\left(A\right)$ for all $\succ\in\cS$, 
so that the probability of being shown $A$ is independent of preference. This is a special case of Assumption~\ref{ass:pct}. Then, in any market $m$, the probability that a DM is of type $\succ$ and
is shown menu $A$ equals
\[
\pi_{m}\left(\succ,A\right)
=
\pi\left(A\mid\succ\right)\nu_{m}\left(\succ\right)
=
\kappa\left(A\right)\nu_{m}\left(\succ\right),
\]
and summing over $\succ$ gives $\mu_{m}\left(A\right)=\kappa\left(A\right)$.
Hence $\pi_{m}\left(\succ,A\right)=\nu_{m}\left(\succ\right)
\mu_{m}\left(A\right)$, which is the independence condition
\eqref{eq:indep} that Section~\ref{sec:frame} showed to underlie the
stochastic choice literature.

Under Assumption~\ref{ass:pct-common} and Assumption~\ref{ass:pct}, let $\gamma\left(A,a\right)\in\left(0,1\right]$ denote the steering rule for menu $A$ and alternative $a$, the probability that menu $A$ is assigned to a supported type whose predicted choice from $A$ is $a$. Then, $\gamma\left(A,a\right)$ is the common value of $\pi\left(A\mid\succ\right)$ across supported types $\succ\in\cT\left(a,A\right)$, so that
\[
\pi\left(A\mid\succ\right)
=
\gamma\left(A,\max\left(A,\succ\right)\right)
\text{ for all }A\in\cD\text{ and for all }\succ\in\cS.
\]

The objects of interest are the market-level counterfactual choice probabilities $\rs_{m}\left(a\mid A\right)=\nu_{m}\left(\cT\left(a,A\right)\right)$, the population counterfactual $\rs\left(a\mid A\right)=\nu\left(\cT\left(a,A\right)\right)=\sum_{m}\lambda_{m}\rs_{m}\left(a\mid A\right)$, and the steering rule $\gamma\left(A,a\right)$.\footnote{The shares $\lambda_m$ are used only for aggregation: the steering rule and the market-level counterfactuals are identified from the market-level data $(d_m)_{m=1}^M$, while $\lambda$ is required only to aggregate to the population counterfactual $\rs$.}

The key implication of Assumption~\ref{ass:pct} is a factorization of the
observed cell masses. Under the assumption, all supported types in \(\cT(a,A)\) share the same steering rule \(\gamma(A,a)\), and each observed cell frequency becomes

\begin{equation}\label{eq:collapse}
d_{m}\left(A,a\right)
=
\gamma\left(A,a\right)
\sum_{\succ\in\cT\left(a,A\right)}\nu_{m}\left(\succ\right)
=
\gamma\left(A,a\right)\,\rs_{m}\left(a\mid A\right).
\end{equation}
This factorization is the engine of the identification
argument: cross-market variation moves the second factor while holding the
first fixed, and I now show how one can separate the two.

By \eqref{eq:collapse}, $\rs_{m}\left(a\mid A\right)=d_{m}\left(A,a\right)/\gamma\left(A,a\right)$, and since the counterfactual choice probabilities on menu $A$ sum to one in every market, the steering rules must satisfy
\begin{equation}\label{eq:pct-system}
\sum_{a\in A}\rs_{m}\left(a\mid A\right)=
\sum_{a\in A}\frac{d_{m}\left(A,a\right)}{\gamma\left(A,a\right)}
=
1,
\qquad
m=1,\dots,M.
\end{equation}

Fix a menu $A$. For each market $m$, \eqref{eq:pct-system} is one linear restriction on the $\left|A\right|$ unknowns $1/\gamma\left(A,a\right)$, $a\in A$, with observed coefficients $d_{m}\left(A,a\right)$. Collecting these coefficients in the $M\times\left|A\right|$ matrix
\[
D_{A}
=
\Big[d_{m}\left(A,a\right)\Big]_{m=1,\dots,M;\ a\in A},
\]
with one row per market and one column per alternative, system \eqref{eq:pct-system} reads
\begin{equation}\label{eq:linsys}
D_{A}\,x=\1_{M},
\qquad
x=\left(1/\gamma\left(A,a\right)\right)_{a\in A},
\end{equation}
which has a unique solution whenever $\operatorname{rank}\left(D_{A}\right)=\left|A\right|$. By \eqref{eq:collapse}, the $a$-th column of $D_{A}$ is $\gamma\left(A,a\right)\cdot\big(\rs_{m}\left(a\mid A\right)\big)_{m=1}^{M}$, a nonzero multiple of the vector of cone masses across markets, so the condition holds if and only if the vectors $\big(\nu_{m}\left(\cT\left(a,A\right)\right)\big)_{m=1}^{M}$, $a\in A$, are linearly independent: markets must genuinely differ in the composition of preferences over $A$, and there must be at least as many markets as alternatives on the menu, $M\ge\left|A\right|$. Formally,

\begin{theorem}[Point identification under outcome-based steering]\label{thm:pct}
Suppose Assumptions~\ref{ass:pct-common}--\ref{ass:pct} hold. Fix $A\in\cD$ with $\operatorname{rank}\left(D_{A}\right)=\left|A\right|$. Then,  for all $a\in A$:
\begin{enumerate}
\item[(i)] the vector $x=\left(1/\gamma\left(A,a\right)\right)_{a\in A}$ is the unique solution of $D_{A}x=\1_{M}$, hence $\gamma\left(A,a\right)$ is uniquely identified;
\item[(ii)] the market-level and population counterfactual choice probabilities are point identified:
\[
\rs_{m}\left(a\mid A\right)
=
\frac{d_{m}\left(A,a\right)}{\gamma\left(A,a\right)},
\qquad
\rs\left(a\mid A\right)
=
\frac{\bar{d}\left(A,a\right)}{\gamma\left(A,a\right)}.
\]
\end{enumerate}
\end{theorem}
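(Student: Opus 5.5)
The plan is to check two things: the true steering values give a solution of the linear system \eqref{eq:linsys}, and the rank condition rules out any other solution. First, under Assumptions~\ref{ass:pct-common}--\ref{ass:pct}, I will derive the factorization \eqref{eq:collapse} carefully. For $\succ\in\cT(a,A)$ with $\nu_m(\succ)>0$ we have $\succ\in\cS$, so $\pi(A\mid\succ)=\gamma(A,\max(A,\succ))=\gamma(A,a)$. Types outside $\cS$ contribute zero mass in every market, so they do not enter $d_m(A,a)$. Summing over the cone then gives $d_m(A,a)=\gamma(A,a)\,\nu_m(\cT(a,A))$.

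Second, because $\gamma(A,a)>0$ by Assumption~\ref{ass:pct}, I can divide to get $\rs_m(a\mid A)=d_m(A,a)/\gamma(A,a)$. The top cones $\{\cT(a,A):a\in A\}$ partition $\cL$ (as used in Theorem~\ref{thm:anything}), so $\sum_{a\in A}\nu_m(\cT(a,A))=1$ for each $m$. Hence the true vector $x^{0}=(1/\gamma(A,a))_{a\in A}$ solves $D_Ax=\1_M$.

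Third, uniqueness is pure linear algebra. $D_A$ is $M\times|A|$ with rank $|A|$, so it has full column rank and its kernel is trivial. If $D_Ax=D_Ax^{0}=\1_M$, then $D_A(x-x^{0})=0$, which forces $x=x^{0}$. Thus $x$ is a function of the observables $(d_m)_m$ alone, and $\gamma(A,a)=1/x_a$ is identified; this gives (i). Explicitly, one can write $x=(D_A^{\top}D_A)^{-1}D_A^{\top}\1_M$.

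For (ii), the market-level formula follows directly from the factorization once $\gamma$ is known. For the population object, I will use linearity: $\rs(a\mid A)=\sum_m\lambda_m\rs_m(a\mid A)=\sum_m\lambda_m d_m(A,a)/\gamma(A,a)=\bar d(A,a)/\gamma(A,a)$, because $\gamma$ does not vary across markets by Assumption~\ref{ass:pct-common}.

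The only delicate step is the first one. There I must justify that types outside $\cS$ can be ignored, and that the common value $\gamma(A,a)$ is well defined even when $\cT(a,A)\cap\cS$ is empty. That case cannot arise under the rank condition: the column of $D_A$ for $a$ would be zero, contradicting full column rank. So the rank hypothesis also guarantees that every $\gamma(A,a)$, $a\in A$, refers to a nonempty set of supported types, which makes it meaningful.
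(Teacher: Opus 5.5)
Your proof is correct and follows the paper's argument. You use the factorization \eqref{eq:collapse} and the fact that the top cones partition $\cL$ to show that the true reciprocals $1/\gamma(A,\cdot)$ solve $D_A x=\1_M$, then use full column rank of $D_A$ for uniqueness, and get (ii) from the factorization and aggregation over markets; your extra care about types outside $\cS$ and about a cone with no supported types (ruled out because its column of $D_A$ would be zero, as the paper's footnote also notes) only tightens the same argument.
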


\begin{proof}
The true $\gamma\left(A,\cdot\right)$ solves \eqref{eq:pct-system}: by \eqref{eq:collapse},
\[
\sum_{a\in A}
\frac{d_{m}\left(A,a\right)}{\gamma\left(A,a\right)}
=
\sum_{a\in A}
\nu_{m}\left(\cT\left(a,A\right)\right)
=
1
\qquad
\text{for every }m,
\]
because the top cones $\left\{\cT\left(a,A\right):a\in A\right\}$ partition $\cL$. For uniqueness, note that if $\gamma'$ solves \eqref{eq:pct-system}, its reciprocals $1/\gamma'\left(A,a\right)$ satisfy the linear system \eqref{eq:linsys}. Since $\operatorname{rank}\left(D_{A}\right)=\left|A\right|$, that system has at most one solution, so the reciprocals, and hence $\gamma'$ itself, coincide with those of $\gamma\left(A,\cdot\right)$. This proves (i). For (ii), \eqref{eq:collapse} gives $\rs_{m}\left(a\mid A\right)=d_{m}\left(A,a\right)/\gamma\left(A,a\right)$, and averaging over markets with weights $\lambda_{m}$ gives $\rs\left(a\mid A\right)=\sum_{m}\lambda_{m}\rs_{m}\left(a\mid A\right)=\bar{d}\left(A,a\right)/\gamma\left(A,a\right)$.
\end{proof}

For any menu $A$, the rank condition on $D_{A}$ alone---whether or not it holds for any other menu---identifies $\gamma\left(A,\cdot\right)$, and with it the counterfactuals $\rs_{m}\left(\cdot\mid A\right)$ in \emph{every} market $m$, including markets that contributed no identifying variation to $D_{A}$. The Appendix applies Theorem~\ref{thm:pct} to the motivating example: with three markets, the first of which reproduces the motivating example exactly, the platform's true steering rule and the counterfactual choice probabilities of all three markets are recovered uniquely.

Unlike the unrestricted model of Theorem~\ref{thm:anything}, which can
rationalize any dataset, the targeted model is falsifiable, even when it
is exactly identified. The recovered steering rules must be probabilities,
$0<\gamma\left(A,a\right)\le 1$ for every $a\in A$;\footnote{An identified
steering rule $\gamma$ can violate this only by being negative or by exceeding one:
$\gamma\left(A,a\right)$ is the reciprocal of a coordinate of the finite
solution of \eqref{eq:linsys}, so it is never zero, and a zero column of
$D_{A}$, which, under Assumption~\ref{ass:pct}, indicates that no preference type in $\cS$ chooses $a$ from $A$, would
already fail the rank condition.} they must give each type a well-defined assignment distribution over
menus: since $\pi\left(A\mid\succ\right)
=\gamma\left(A,\max\left(A,\succ\right)\right)$, the recovered values must
satisfy
$\sum_{A\in\cD}\gamma\left(A,\max\left(A,\succ\right)\right)=1$ for every
$\succ\in\cS$; and the implied counterfactuals must
be rationalizable for each market by a single preference distribution:
for each $m$ there must exist $\nu_{m}\in\Delta\left(\cL\right)$ with
$\nu_{m}\left(\cT\left(a,A\right)\right)=\rs_{m}\left(a\mid A\right)$, which is the revealed stochastic preference problem of McFadden and
Richter (1991).\footnote{McFadden and Richter (1991) show that such a
$\nu_{m}$ exists if and only if, for every finite sequence of cells
$\left(A_{1},a_{1}\right),\dots,\left(A_{K},a_{K}\right)\in\cG$ with
repetitions allowed,
$\sum_{k=1}^{K}\rho_m^{*}(a_k\mid A_k)
\le
\max_{\succ\in\cL}
\sum_{k=1}^{K}\mathbf 1\{a_k=\max(A_k,\succ)\}.$.} None of these restrictions is guaranteed by the data, and each is derived from the
conjunction of preference maximization (the underlying RUM), common steering (Assumption~\ref{ass:pct-common}), and outcome-based steering (Assumption~\ref{ass:pct}),
so any violation falsifies the joint hypothesis without indicating which
component fails.\footnote{Since the goal is to identify the steering
rule and the counterfactuals from minimal data, $M=\left|A\right|$ markets
with $\operatorname{rank}\left(D_{A}\right)=\left|A\right|$ suffice for
menu $A$. Once the rank condition is already satisfied, additional markets do not improve point identification. Unlike settings in which extra observations are simply
redundant, here each market beyond $\left|A\right|$ adds one
overidentifying restriction---system \eqref{eq:linsys} acquires more
equations than unknowns, and its solvability becomes a falsifiable
implication of the model as well.}

\section{Concluding Remarks}\label{sec:conclusion}

For the stochastic choice literature, this paper identifies an important yet largely ignored problem and offers a solution. The problem is that any axiom is a test of a joint hypothesis of rationality \emph{and} exogenous menu assignment, so observational violations of these axioms cannot by themselves be read as evidence of behavioral phenomena. The solution requires almost nothing new from the data: the menu frequency $\mu$, recorded in many applied choice datasets and discarded at the analysis stage, is the object that converts stochastic choice data from
uninterpretable to sharply bounded, and market-level variation yields unique identification. Notwithstanding the parsimonious nature of the dataset, these observables encode substantial information. In the motivating example, the sharp bounds alone cap the seeming decoy effect at zero, and the three-market system identifies the causal estimands and the menu assignment rule uniquely in every market.

The framework thus extends the reach of revealed preference analysis from the RCT, where menus are exogenously assigned, to the scenarios in which menus are endogenously assigned. In practice, sometimes the assignment rules are dynamic: recommenders update on realized choices, so today's menu depends on yesterday's behavior. Extending the framework to a dynamic assignment process, and asking whether the platform's own learning dynamics generate the variation needed for identification, would connect this agenda to adaptive menu assignment. I leave this for future work.

\newpage
\appendix
\section{Worked Identification in the Motivating Example}\label{app:worked-identification}
This appendix carries out the paper's identification program on the motivating Example~\ref{sec:example}. Section~\ref{app:gale-identification} derives the sharp bounds on the counterfactual choice probabilities and the steering probabilities from Theorem~\ref{thm:sharp}. Section~\ref{app:closed-form-identification} shows that the closed forms of Theorem~\ref{thm:closedform} reproduce the same bounds mechanically. Section~\ref{app:pct-identification} adds the structure of Section~\ref{sec:pct} and obtains point identification from Theorem~\ref{thm:pct}.

Let $X=\{a,b,c\}$ and $\cD=\{\{a,b\},\{a,b,c\}\}$. The six strict orders on $X$ are
\begin{equation*}
\succ_1:abc,\quad
\succ_2:acb,\quad
\succ_3:bac,\quad
\succ_4:bca,\quad
\succ_5:cab,\quad
\succ_6:cba,
\end{equation*}
where $abc$ means $a\succ b\succ c$. The analyst observes $\mu(\{a,b\})=\mu(\{a,b,c\})=1/2$ and the choice probabilities $\rho(a\mid\{a,b\})=1/10$, $\rho(a\mid\{a,b,c\})=9/10$, $\rho(c\mid\{a,b,c\})=0$ of the motivating example. Table~\ref{tab:app-cells} numbers the five observed cells and records, for each, its top cone and its mass $d(A,x)=\mu(A)\rho(x\mid A)$. These five masses are the only numerical inputs used in this appendix; later tables refer to cells by their number.
\begin{table}[ht]
\centering
\begin{tabular}{cccc}
\toprule
Cell no. & Cell $(A,x)$ & Top cone $\cT(x,A)$ & Mass $d(A,x)$\\
\midrule
1 & $(\{a,b\},a)$ & $\{\succ_1,\succ_2,\succ_5\}$ & $1/20$\\
2 & $(\{a,b\},b)$ & $\{\succ_3,\succ_4,\succ_6\}$ & $9/20$\\
3 & $(\{a,b,c\},a)$ & $\{\succ_1,\succ_2\}$ & $9/20$\\
4 & $(\{a,b,c\},b)$ & $\{\succ_3,\succ_4\}$ & $1/20$\\
5 & $(\{a,b,c\},c)$ & $\{\succ_5,\succ_6\}$ & $0$\\
\bottomrule
\end{tabular}
\caption{Observed cells, their top cones, and their masses.}
\label{tab:app-cells}
\end{table}

\subsection{Bounds from Theorem~\ref{thm:sharp}}\label{example:sharp}
\label{app:gale-identification}
By Theorem~\ref{thm:sharp}, a preference distribution $\nu$ is consistent with the data if and only if, for every collection of observed cells $\cE$,
\begin{equation}\label{eq:app-hall}
\sum_{(A,x)\in\cE}d(A,x)
\le
\nu\Big(
\bigcup_{(A,x)\in\cE}\cT(x,A)
\Big).
\end{equation}
This yields bounds on any cone mass $\nu(\cT(x,A))$ by two rules.
\emph{Lower bound:} a cell whose top cone is \emph{contained} in
$\cT(x,A)$ forces its mass into $\cT(x,A)$; this happens when the cell
records the same choice $x$ from a menu containing $A$. \emph{Upper
bound:} a cell whose top cone is \emph{disjoint} from $\cT(x,A)$ forces
its mass into the complement; this happens when the cell records a rival
$y\in A\setminus\{x\}$ chosen while $x$ was available. Summing the
disjoint cells bounds the complement from below, hence $\nu(\cT(x,A))$
from above.

To illustrate, take $\rs(a\mid\{a,b,c\})=\nu(\{\succ_1,\succ_2\})$. Cell 3 has top cone exactly $\{\succ_1,\succ_2\}$, so \eqref{eq:app-hall} with $\cE=\{3\}$ gives the lower bound $9/20$. Cells 2, 4, and 5 have top cones disjoint from $\{\succ_1,\succ_2\}$ (each records $b$ or $c$ chosen while $a$ was available), so \eqref{eq:app-hall} with $\cE=\{2,4,5\}$ gives $\nu(\cL\setminus\{\succ_1,\succ_2\})\ge 9/20+1/20+0=1/2$, hence the upper bound $1/2$. Table~\ref{tab:app-bounds} applies the same two rules to all five cone events; each entry can be checked against the top cones in Table~\ref{tab:app-cells} by inspection.
\begin{table}[ht]
\centering
\small
\begin{tabular}{lccccc}
\toprule
& \multicolumn{2}{c}{Cone $\subseteq\cT(x,A)$}
& \multicolumn{2}{c}{Cone $\cap\,\cT(x,A)=\emptyset$}
& \\
\cmidrule(lr){2-3}\cmidrule(lr){4-5}
$\rs(x\mid A)$ & Cells & Lower bound & Cells & Upper bound & Interval\\
\midrule
$\rs(a\mid\{a,b\})$   & 1, 3 & $1/2$  & 2, 4    & $1/2$ & $\{1/2\}$\\
$\rs(b\mid\{a,b\})$   & 2, 4 & $1/2$  & 1, 3    & $1/2$ & $\{1/2\}$\\
$\rs(a\mid\{a,b,c\})$ & 3    & $9/20$ & 2, 4, 5 & $1/2$ & $[9/20,\,1/2]$\\
$\rs(b\mid\{a,b,c\})$ & 4    & $1/20$ & 1, 3, 5 & $1/2$ & $[1/20,\,1/2]$\\
$\rs(c\mid\{a,b,c\})$ & 5    & $0$    & 3, 4    & $1/2$ & $[0,\,1/2]$\\
\bottomrule
\end{tabular}
\caption{Sharp bounds on the counterfactual choice probabilities. ``Lower bound'' is the total mass of the listed cells; ``Upper bound'' is one minus the total mass of the cells whose cones are disjoint from $\cT(x,A)$.}
\label{tab:app-bounds}
\end{table}

Note that on the binary menu $\{a,b\}$ the bounds collapse to a point: the lower bounds for $a$ and for $b$ are each $1/2$, and since $\cT(a,\{a,b\})$ and $\cT(b,\{a,b\})$ partition $\cL$, two lower bounds summing to one force both to hold with equality. The data thus point-identify $\rs(a\mid\{a,b\})=\rs(b\mid\{a,b\})=1/2$ with no assumption on the assignment rule. For the ternary menu $\{a,b,c\}$, by contrast, only intervals are identified, and the interval for $c$ has lower endpoint zero: cell 5 is empty, and the data cannot rule out that $c$-types exist but are steered away from $\{a,b,c\}$.

Steering bounds follow from Corollary~\ref{cor:steering-bounds}: for each cell with $d(A,x)>0$, the steering probability equals $d(A,x)/\nu(\cT(x,A))$, so its sharp interval is obtained by dividing the cell mass by the endpoints of the interval in Table~\ref{tab:app-bounds} (in reverse order). Table~\ref{tab:app-steering} reports the results; the benchmark under exogenous assignment is $\mu(A)=1/2$ for every cell.
\begin{table}[ht]
\centering
\begin{tabular}{cccc}
\toprule
Cell no. & $d(A,x)$ & Interval for $\rs(x\mid A)$ & Interval for $\pi(A\mid\cT(x,A))$\\
\midrule
1 & $1/20$ & $\{1/2\}$        & $\{1/10\}$\\
2 & $9/20$ & $\{1/2\}$        & $\{9/10\}$\\
3 & $9/20$ & $[9/20,\,1/2]$   & $[9/10,\,1]$\\
4 & $1/20$ & $[1/20,\,1/2]$   & $[1/10,\,1]$\\
\bottomrule
\end{tabular}
\caption{Sharp bounds on the steering probabilities.}
\label{tab:app-steering}
\end{table}

\subsection{Closed-Form Bounds from Theorem~\ref{thm:closedform}}\label{example:closed}
\label{app:closed-form-identification}
The bounds of Table~\ref{tab:app-bounds} required selecting, for each target, the right collection of cells. Theorem~\ref{thm:closedform} removes the selection. By part (ii) and the closed forms,
\begin{equation*}
\rs(x\mid A)
\in
\Bigg[
\sum_{B\in\cD:\,B\supseteq A}d(B,x),
\;\;
1-\sum_{B\in\cD:\,x\in B}\ \sum_{y\in(A\cap B)\setminus\{x\}}d(B,y)
\Bigg],
\end{equation*}
so each endpoint is determined by the following rule: the lower bound adds up the cells in which $x$ itself is chosen from a menu containing all of $A$; the upper bound subtracts the cells in which some rival $y\in A$ is chosen while $x$ is available. To illustrate, take $\rs(a\mid\{a,b,c\})$. The only observed menu containing $\{a,b,c\}$ is $\{a,b,c\}$ itself, so the lower bound is $d(\{a,b,c\},a)=9/20$. Both observed menus contain $a$, with rival sets $(A\cap B)\setminus\{a\}$ equal to $\{b\}$ for $B=\{a,b\}$ and $\{b,c\}$ for $B=\{a,b,c\}$, so the upper bound is $1-[d(\{a,b\},b)+d(\{a,b,c\},b)+d(\{a,b,c\},c)]=1-[9/20+1/20+0]=1/2$. Table~\ref{tab:app-closedform} records the same computation for all five targets. As proved in Theorem~\ref{thm:closedform}, the intervals coincide with those of Table~\ref{tab:app-bounds}.
\begin{table}[ht]
\centering
\small
\begin{tabular}{lccc}
\toprule
Target
& Lower bound
& Upper bound
& Interval\\
\midrule
$\rs(a\mid\{a,b\})$ & $\frac{1}{20}+\frac{9}{20}=\frac12$ & $1-\big(\frac{9}{20}+\frac{1}{20}\big)=\frac12$ & $\{\frac12\}$\\[3pt]
$\rs(b\mid\{a,b\})$ & $\frac{9}{20}+\frac{1}{20}=\frac12$ & $1-\big(\frac{1}{20}+\frac{9}{20}\big)=\frac12$ & $\{\frac12\}$\\[3pt]
$\rs(a\mid\{a,b,c\})$ & $\frac{9}{20}$ & $1-\big(\frac{9}{20}+\frac{1}{20}+0\big)=\frac12$ & $\big[\frac{9}{20},\frac12\big]$\\[3pt]
$\rs(b\mid\{a,b,c\})$ & $\frac{1}{20}$ & $1-\big(\frac{1}{20}+\frac{9}{20}+0\big)=\frac12$ & $\big[\frac{1}{20},\frac12\big]$\\[3pt]
$\rs(c\mid\{a,b,c\})$ & $0$ & $1-\big(\frac{9}{20}+\frac{1}{20}\big)=\frac12$ & $\big[0,\frac12\big]$\\
\bottomrule
\end{tabular}
\caption{The closed-form bounds of Theorem~\ref{thm:closedform}(ii). The lower bound is $\sum_{B\supseteq A}d(B,x)$; the upper bound is $1-\sum_{B:\,x\in B}\sum_{y\in(A\cap B)\setminus\{x\}}d(B,y)$.}
\label{tab:app-closedform}
\end{table}

Part (iii) then delivers the steering bounds by division. For instance, for the cell $(\{a,b,c\},a)$, with $d(\{a,b,c\},a)=9/20$,
\begin{equation*}
\pi(\{a,b,c\}\mid\cT(a,\{a,b,c\}))
\in
\left[
\frac{9/20}{1/2},
\;
\frac{9/20}{9/20}
\right]
=
\left[
\frac{9}{10},1
\right],
\end{equation*}
and the remaining cells reproduce Table~\ref{tab:app-steering} in the same way.

\subsection{Point Identification from Theorem~\ref{thm:pct}}\label{example:unique}
\label{app:pct-identification}
The intervals in Table~\ref{tab:app-bounds} are sharp: with a single market and no assumption on steering, they cannot be narrowed. This subsection uses the result in Section~\ref{sec:pct}, and shows that with common steering (Assumption~\ref{ass:pct-common}) and outcome-based steering (Assumption~\ref{ass:pct}), Theorem~\ref{thm:pct} recovers the exact counterfactuals and the exact assignment rule, given sufficient market variation. The exercise proceeds in three steps: I first specify the true primitives, then generate the observable cell masses from them, and finally show that an analyst who sees only the observables recovers the primitives uniquely.

\emph{True primitives.} Suppose the same platform serves $M=3$ markets. All primitives are collected in Table~\ref{tab:app-point}. Its first row is the steering rule, common to all markets and outcome-based: types predicted to choose $a$ from a feed see the augmented feed with probability $\gamma(\{a,b,c\},a)=9/10$ and the standard feed with probability $\gamma(\{a,b\},a)=1/10$; types predicted to choose $b$ face the reverse probabilities; and the rare $c$-types are treated like $a$-types, $\gamma(\{a,b,c\},c)=9/10$. Each supported type is assigned some menu with total probability one over the menus. The markets differ in tastes. Market $1$ is exactly the same as the motivating Example~\ref{sec:example}: $\nu_1$ places mass $\tfrac12$ on $\succ_1$ ($abc$) and $\tfrac12$ on $\succ_3$ ($bac$). The other two markets also contain types who would choose $c$: $\nu_2$ places mass $\left(\tfrac{3}{10},\tfrac{2}{5},\tfrac{3}{10}\right)$ and $\nu_3$ places mass $\left(\tfrac{1}{5},\tfrac{7}{10},\tfrac{1}{10}\right)$ on the orders $\left(\succ_1,\succ_3,\succ_5\right)=\left(abc,\,bac,\,cab\right)$. The remaining rows of Table~\ref{tab:app-point} are the implied counterfactuals, obtained by summing each $\nu_m$ over the top cones of Table~\ref{tab:app-cells}: $\rs_m(x\mid A)=\nu_m(\cT(x,A))$ is the total mass of the orders that rank $x$ first within $A$. For example, $\cT(a,\{a,b\})=\{\succ_1,\succ_2,\succ_5\}$, of which $\nu_2$ supports $\succ_1$ and $\succ_5$, so $\rs_2(a\mid\{a,b\})=\tfrac{3}{10}+\tfrac{3}{10}=\tfrac{3}{5}$; whereas $\cT(a,\{a,b,c\})=\{\succ_1,\succ_2\}$, so only $\succ_1$ contributes and $\rs_2(a\mid\{a,b,c\})=\tfrac{3}{10}$; and $\cT(c,\{a,b,c\})=\{\succ_5,\succ_6\}$, so $\rs_2(c\mid\{a,b,c\})=\nu_2(\succ_5)=\tfrac{3}{10}$. The remaining entries are computed in the same way.
\begin{table}[ht]
\centering
\begin{tabular}{lccccc}
\toprule
& \multicolumn{2}{c}{$A=\{a,b\}$} & \multicolumn{3}{c}{$A=\{a,b,c\}$}\\
\cmidrule(lr){2-3}\cmidrule(lr){4-6}
& $x=a$ & $x=b$ & $x=a$ & $x=b$ & $x=c$\\
\midrule
$\gamma(A,x)$        & $\frac{1}{10}$ & $\frac{9}{10}$ & $\frac{9}{10}$ & $\frac{1}{10}$ & $\frac{9}{10}$\\[2pt]
$\rs_1(x\mid A)$     & $\frac12$      & $\frac12$      & $\frac12$      & $\frac12$      & $0$\\[2pt]
$\rs_2(x\mid A)$     & $\frac35$      & $\frac25$      & $\frac{3}{10}$ & $\frac25$      & $\frac{3}{10}$\\[2pt]
$\rs_3(x\mid A)$     & $\frac{3}{10}$ & $\frac{7}{10}$ & $\frac15$      & $\frac{7}{10}$ & $\frac{1}{10}$\\
\bottomrule
\end{tabular}
\caption{True primitives: the steering rule and the market-level counterfactual choice probabilities.}
\label{tab:app-point}
\end{table}

\emph{Observables.} By \eqref{eq:collapse}, each observed cell mass is the product of a steering rule and a counterfactual, $d_m(A,x)=\gamma(A,x)\,\rs_m(x\mid A)$. For instance, $d_2(\{a,b,c\},a)=\tfrac{9}{10}\times\tfrac{3}{10}=\tfrac{27}{100}$. Multiplying Table~\ref{tab:app-point} out cell by cell gives Table~\ref{tab:app-markets}, the dataset the analyst observes; each row sums to one.
\begin{table}[ht]
\centering
\begin{tabular}{cccccc}
\toprule
& \multicolumn{5}{c}{Cell no.}\\
\cmidrule(lr){2-6}
Market & 1 & 2 & 3 & 4 & 5\\
\midrule
$m=1$ & $\frac{1}{20}$ & $\frac{9}{20}$ & $\frac{9}{20}$ & $\frac{1}{20}$ & $0$ \\[2pt]
$m=2$ & $\frac{3}{50}$ & $\frac{9}{25}$ & $\frac{27}{100}$ & $\frac{1}{25}$ & $\frac{27}{100}$ \\[2pt]
$m=3$ & $\frac{3}{100}$ & $\frac{63}{100}$ & $\frac{9}{50}$ & $\frac{7}{100}$ & $\frac{9}{100}$ \\
\bottomrule
\end{tabular}
\caption{Observed cell masses $d_m(A,x)$ in the three markets, by the cell numbering of Table~\ref{tab:app-cells}.}
\label{tab:app-markets}
\end{table}

\emph{Identification.} The analyst sees only Table~\ref{tab:app-markets}, and knows neither $\gamma$ nor the $\rs_m$. Recall the mechanics: for each menu $A$, the reciprocals of the steering rules solve the linear system \eqref{eq:linsys}, $D_A x=\1_M$ with $D_A=[d_m(A,x)]_{m,x}$, and the solution is unique when $\operatorname{rank}(D_A)=\left|A\right|$. Write $t=(t_a,t_b)$ for the solution on $\{a,b\}$ and $j=(j_a,j_b,j_c)$ for the solution on $\{a,b,c\}$.

For the menu $\{a,b\}$, the coefficient matrix consists of columns 1 and 2 of Table~\ref{tab:app-markets}:
\begin{equation*}
\begin{pmatrix}
\frac{1}{20} & \frac{9}{20}\\[2pt]
\frac{3}{50} & \frac{9}{25}\\[2pt]
\frac{3}{100} & \frac{63}{100}
\end{pmatrix}
\begin{pmatrix}
t_a\\
t_b
\end{pmatrix}
=
\begin{pmatrix}
1\\
1\\
1
\end{pmatrix}.
\end{equation*}
The two columns are not proportional, so $\operatorname{rank}(D_{\{a,b\}})=2$: the rank condition holds, and the unique solution is $t_a=10$, $t_b=10/9$. The steering rules are the reciprocals of the solution: $\gamma(\{a,b\},a)=1/t_a=1/10$. The counterfactuals are the observed cell masses divided by the steering rules. For instance,
\begin{equation*}
\rs_2(a\mid\{a,b\})
=
d_2(\{a,b\},a)\,t_a
=
\tfrac{3}{50}\times 10
=
\tfrac{3}{5}.
\end{equation*}

For the menu $\{a,b,c\}$, the coefficient matrix consists of columns 3--5 of Table~\ref{tab:app-markets}:
\begin{equation*}
\begin{pmatrix}
\frac{9}{20} & \frac{1}{20} & 0\\[2pt]
\frac{27}{100} & \frac{1}{25} & \frac{27}{100}\\[2pt]
\frac{9}{50} & \frac{7}{100} & \frac{9}{100}
\end{pmatrix}
\begin{pmatrix}
j_a\\
j_b\\
j_c
\end{pmatrix}
=
\begin{pmatrix}
1\\
1\\
1
\end{pmatrix}.
\end{equation*}
Direct computation confirms that the three columns are linearly independent, so $\operatorname{rank}(D_{\{a,b,c\}})=3$, and the unique solution is $j_a=10/9$, $j_b=10$, $j_c=10/9$. As before, $\gamma(\{a,b,c\},b)=1/j_b=1/10$, and, for instance, $\rs_3(c\mid\{a,b,c\})=d_3(\{a,b,c\},c)\,j_c=\tfrac{9}{100}\times\tfrac{10}{9}=\tfrac{1}{10}$. Repeating cell by cell recovers Table~\ref{tab:app-point}. If the population shares $\lambda_m$ are also observed, the population counterfactual follows by aggregation, $\rs(x\mid A)=\sum_m\lambda_m\rs_m(x\mid A)$.\footnote{With only two markets, identification holds on the binary menu
but not on the ternary one. The matrix $D_{\{a,b\}}$ is $2\times 2$ and
can satisfy the rank condition, in which case $\gamma(\{a,b\},\cdot)$ and
$\rs_m(\cdot\mid\{a,b\})$ in every market are point identified. The
matrix $D_{\{a,b,c\}}$, however, is $2\times 3$, so
$\operatorname{rank}(D_{\{a,b,c\}})\le 2<3$: its system admits a line of
solutions rather than a point, and neither $\gamma(\{a,b,c\},\cdot)$ nor
any $\rs_m(\cdot\mid\{a,b,c\})$ is point identified.}


\begin{thebibliography}{99}

\bibitem[Abaluck and Adams-Prassl(2021)]{AbaluckAdamsPrassl2021}
Abaluck, Jason, and Abi Adams-Prassl. 2021.
``What Do Consumers Consider Before They Choose? Identification from Asymmetric Demand Responses.''
\emph{Quarterly Journal of Economics}, 136(3): 1611--1663.

\bibitem[Agarwal and Somaini(2026)]{AgarwalSomaini2025}
Agarwal, Nikhil, and Paulo Somaini. 2026.
``Demand Analysis under Latent Choice Constraints.''
\emph{Review of Economic Studies}, 93(4), 2215--2249.

\bibitem[Aguiar and Kashaev(2025)]{AguiarKashaev2025}
Aguiar, Victor H., and Nail Kashaev. 2025.
``Identification and Estimation of Discrete Choice Models with Unobserved Choice Sets.''
\emph{Journal of Business \& Economic Statistics}, 43(1): 204--215.

\bibitem[Aguiar et al.(2023)]{AguiarEtAl2023}
Aguiar, Victor H., Maria Jose Boccardi, Nail Kashaev, and Jeongbin Kim. 2023.
``Random Utility and Limited Consideration.''
\emph{Quantitative Economics}, 14(1): 71--116.

\bibitem[Apesteguia, Ballester, and Lu(2017)]{ApesteguiaBallesterLu2017}
Apesteguia, Jose, Miguel A. Ballester, and Jay Lu. 2017.
``Single-Crossing Random Utility Models.''
\emph{Econometrica} 85: 661--674.

\bibitem[Apesteguia and Ballester(2023)]{ApesteguiaBallester2023}
Apesteguia, Jose, and Miguel A. Ballester. 2023.
``Random Utility Models with Ordered Types and Domains.''
\emph{Journal of Economic Theory} 211: 105674.

\bibitem[Barber{\`a} and Pattanaik(1986)]{BarberaPattanaik1986}
Barber{\`a}, Salvador, and Prasanta K. Pattanaik. 1986.
``Falmagne and the Rationalizability of Stochastic Choices in Terms of Random Orderings.''
\emph{Econometrica} 54: 707--715.

\bibitem[Barseghyan et al.(2021)]{BarseghyanEtAl2021}
Barseghyan, Levon, Maura Coughlin, Francesca Molinari, and Joshua C. Teitelbaum. 2021.
``Heterogeneous Choice Sets and Preferences.''
\emph{Econometrica}, 89(5): 2015--2048.

\bibitem[Block and Marschak(1960)]{BlockMarschak1960}
Block, H. D., and Jacob Marschak. 1960.
``Random Orderings and Stochastic Theories of Responses.''
In \emph{Contributions to Probability and Statistics: Essays in Honor of Harold Hotelling}, edited by I. Olkin, S. G. Ghurye, W. Hoeffding, W. G. Madow, and H. B. Mann, 97--132. Stanford: Stanford University Press.

\bibitem[Caplin and Dean(2015)]{CaplinDean2015}
Caplin, Andrew, and Mark Dean. 2015.
``Revealed Preference, Rational Inattention, and Costly Information Acquisition.''
\emph{American Economic Review}, 105(7): 2183--2203.

\bibitem[Caradonna and Turansick(2026)]{CaradonnaTuransick2026}
Caradonna, Peter P., and Christopher Turansick. 2026.
``Identification in Stochastic Choice.'' Working paper.

\bibitem[Cattaneo et al.(2020)]{CattaneoEtAl2020}
Cattaneo, Matias D., Xinwei Ma, Yusufcan Masatlioglu, and Elchin Suleymanov. 2020.
``A Random Attention Model.''
\emph{Journal of Political Economy}, 128(7): 2796--2836.

\bibitem[Chambers and Turansick(2025)]{ChambersTuransick2025}
Chambers, Christopher P., and Christopher Turansick. 2025.
``The Limits of Identification in Discrete Choice.''
\emph{Games and Economic Behavior} 150: 537--551.

\bibitem[Conlon and Mortimer(2013)]{ConlonMortimer2013}
Conlon, Christopher T., and Julie Holland Mortimer. 2013.
``Demand Estimation under Incomplete Product Availability.''
\emph{American Economic Journal: Microeconomics}, 5(4): 1--30.

\bibitem[Covington, Adams, and Sargin(2016)]{CovingtonAdamsSargin2016}
Covington, Paul, Jay Adams, and Emre Sargin. 2016.
``Deep Neural Networks for YouTube Recommendations.''
In \emph{Proceedings of the 10th ACM Conference on Recommender Systems},
191--198. New York: Association for Computing Machinery.

\bibitem[Crawford, Griffith, and Iaria(2021)]{CrawfordGriffithIaria2021}
Crawford, Gregory S., Rachel Griffith, and Alessandro Iaria. 2021.
``A Survey of Preference Estimation with Unobserved Choice Set Heterogeneity.''
\emph{Journal of Econometrics}, 222(1): 4--43.

\bibitem[Dardanoni et al.(2020)]{DardanoniEtAl2020}
Dardanoni, Valentino, Paola Manzini, Marco Mariotti, and Christopher J. Tyson. 2020.
``Inferring Cognitive Heterogeneity from Aggregate Choices.''
\emph{Econometrica}, 88(3): 1269--1296.

\bibitem[Dardanoni et al.(2023)]{DardanoniEtAl2023}
Dardanoni, Valentino, Paola Manzini, Marco Mariotti, Henrik Petri, and
Christopher J. Tyson. 2023.
``Mixture Choice Data: Revealing Preferences and Cognition.''
\emph{Journal of Political Economy}, 131(3): 687--715.

\bibitem[Falmagne(1978)]{Falmagne1978}
Falmagne, Jean-Claude. 1978.
``A Representation Theorem for Finite Random Scale Systems.''
\emph{Journal of Mathematical Psychology}, 18(1): 52--72.

\bibitem[Fiorini(2004)]{Fiorini2004}
Fiorini, Samuel. 2004.
``A Short Proof of a Theorem of Falmagne.''
\emph{Journal of Mathematical Psychology}, 48(1): 80--82.

\bibitem[Gale(1957)]{Gale1957}
Gale, David. 1957.
``A Theorem on Flows in Networks.''
\emph{Pacific Journal of Mathematics}, 7(2): 1073--1082.

\bibitem[Gaynor, Propper, and Seiler(2016)]{GaynorPropperSeiler2016}
Gaynor, Martin, Carol Propper, and Stephan Seiler. 2016.
``Free to Choose? Reform, Choice, and Consideration Sets in the English National Health Service.''
\emph{American Economic Review}, 106(11): 3521--3557.

\bibitem[Gibbard(2021)]{Gibbard2021}
Gibbard, Peter. 2021.
``Disentangling Preferences and Limited Attention: Random-Utility Models with Consideration Sets.''
\emph{Journal of Mathematical Economics}, 94: 102468.

\bibitem[Gul and Pesendorfer(2006)]{GulPesendorfer2006}
Gul, Faruk, and Wolfgang Pesendorfer. 2006.
``Random Expected Utility.''
\emph{Econometrica}, 74(1): 121--146.

\bibitem[Honka, Horta\c{c}su, and Vitorino(2017)]{HonkaHortacsuVitorino2017}
Honka, Elisabeth, Ali Horta\c{c}su, and Maria Ana Vitorino. 2017.
``Advertising, Consumer Awareness, and Choice: Evidence from the U.S. Banking Industry.''
\emph{RAND Journal of Economics}, 48(3): 611--646.

\bibitem[Kitamura and Stoye(2018)]{KitamuraStoye2018}
Kitamura, Yuichi, and J{\"o}rg Stoye. 2018.
``Nonparametric Analysis of Random Utility Models.''
\emph{Econometrica} 86: 1883--1909.

\bibitem[Knott, Hayes, and Neslin(2002)]{KnottHayesNeslin2002}
Knott, Aaron, Andrew Hayes, and Scott A. Neslin. 2002.
``Next-Product-to-Buy Models for Cross-Selling Applications.''
\emph{Journal of Interactive Marketing}, 16(3): 59--75.

\bibitem[Kono, Saito, and Sandroni(2025)]{KonoSaitoSandroni2025}
Kono, Haruki, Kota Saito, and Alec Sandroni. 2025.
``Random Utility with Unobservable Alternatives.''
\emph{American Economic Review}, forthcoming.

\bibitem[Liu and Luo(2025)]{LiuLuo2025}
Liu, Hui, and Yao Luo. 2025.
``Demand Analysis under Price Rigidity and Endogenous Assortment: An Application to China's Tobacco Industry.''
Working paper, arXiv:2501.17251.

\bibitem[Lops, de Gemmis, and Semeraro(2011)]{LopsDeGemmisSemeraro2011}
Lops, Pasquale, Marco de Gemmis, and Giovanni Semeraro. 2011.
``Content-Based Recommender Systems: State of the Art and Trends.''
In \emph{Recommender Systems Handbook}, edited by Francesco Ricci, Lior Rokach, Bracha Shapira, and Paul B. Kantor, 73--105. Boston: Springer.


\bibitem[Manski(2007)]{Manski2007}
Manski, Charles F. 2007.
``Partial Identification of Counterfactual Choice Probabilities.''
\emph{International Economic Review}, 48(4): 1393--1410.

\bibitem[Manzini and Mariotti(2014)]{ManziniMariotti2014}
Manzini, Paola, and Marco Mariotti. 2014.
``Stochastic Choice and Consideration Sets.''
\emph{Econometrica}, 82(3): 1153--1176.

\bibitem[Masatlioglu, Nakajima, and Ozbay(2012)]{MasatliogluNakajimaOzbay2012}
Masatlioglu, Yusufcan, Daisuke Nakajima, and Erkut Y. Ozbay. 2012.
``Revealed Attention.''
\emph{American Economic Review}, 102(5): 2183--2205.

\bibitem[McFadden and Richter(1991)]{McFaddenRichter1991}
McFadden, Daniel, and Marcel K. Richter. 1991.
``Stochastic Rationality and Revealed Stochastic Preference.''
In \emph{Preferences, Uncertainty and Rationality,}, edited by J. S. Chipman, D. McFadden, and M. K. Richter, 161--186. Boulder: Westview Press.

\bibitem[Shapley(1971)]{Shapley1971}
Shapley, Lloyd S. 1971.
``Cores of Convex Games.''
\emph{International Journal of Game Theory}, 1(1): 11--26.

\bibitem[Tomlinson, Ugander, and Benson(2021)]{TomlinsonUganderBenson2021}
Tomlinson, Kiran, Johan Ugander, and Austin R. Benson. 2021.
``Choice Set Confounding in Discrete Choice.''
In \emph{Proceedings of the 27th ACM SIGKDD Conference on Knowledge Discovery and Data Mining}. New York: Association for Computing Machinery, 1571--1581.

\bibitem[Turansick(2022)]{Turansick2022}
Turansick, Christopher. 2022.
``Identification in the Random Utility Model.''
\emph{Journal of Economic Theory} 203: 105489.

\end{thebibliography}
\end{document}